\newcommand{\blind}{0}
\def\spacingset#1{\renewcommand{\baselinestretch}%
	{#1}\small\normalsize} \spacingset{1}
\numberwithin{equation}{section}
\numberwithin{figure}{section}
\def\namedlabel#1#2{\begingroup
	#2%
	\def\@currentlabel{#2}%
	\phantomsection\label{#1}\endgroup
}
\newcommand*{\fullref}[1]{\hyperref[{#1}]{\autoref*{#1} of \nameref*{#1}}}
\theoremstyle{plain}
\newtheorem{thm}{Theorem}[section]
\theoremstyle{definition}
\theoremstyle{remark}
\newtheorem{ps}{Proposition}[section]
\newtheorem{corollary}{Corollary}[section]
\newsavebox{\mybox}\newsavebox{\mysim}
\newcommand{\distas}[1]{%
  \savebox{\mybox}{\hbox{\kern3pt$\scriptstyle#1$\kern3pt}}%
  \savebox{\mysim}{\hbox{$\sim$}}%
  \mathbin{\overset{\text{#1}}{\kern\z@\resizebox{\wd\mybox}{\ht\mysim}{$\sim$}}}%
}
\newcommand{\defas}{\overset{\triangle}{=}}
\newcommand{\inner}[2]{\left\langle #1, #2 \right\rangle}
\newcommand*{\cbrk}[1]{\left\{ #1 \right\}} % cruly brakect
\DeclareMathOperator*{\argmax}{arg\,max}
\newcommand{\toinf}{\rightarrow \infty}
\newcommand{\tozero}{\rightarrow 0}
\newcommand{\calP}{\mathcal{P}}
\newcommand{\calG}{\mathcal{G}}
\newcommand{\calT}{\mathcal{T}}
\newcommand{\dsR}{\mathds{R}}
\newcommand{\vphi}{\varphi}
\newcommand*{\tl}[1]{\tilde{#1}}
\newcommand*{\h}[1]{\hat{#1}}
\newcommand*{\ck}[1]{\check{#1}}
\newcommand*{\rng}[1]{\mathring{#1}}
\newcommand*{\hmu}{\hat{\mu}}
\newcommand*{\hl}{\hat{\lambda}}
\newcommand*{\hphi}{\hat{\varphi}}
\newcommand{\rnk}{\operatorname{rank}}
\newcommand{\spn}{\operatorname{span}}
\renewcommand{\Pr}{P}
\newcommand{\E}{E}
\newcommand{\Var}{\operatorname{Var}}
\newcommand{\var}{\operatorname{var}}
\newcommand{\cov}{\operatorname{cov}}
\newcommand{\ave}{\text{ave}}
\DeclarePairedDelimiterX{\infdivx}[2]{(}{)}{%
	#1\;\delimsize\|\;#2%
}
\newcommand*{\kldiv}[2]{D\infdivx{#1}{#2}}
\newcommand*{\inv}{^{-1}}
\newcommand*{\invsqr}{^{-1/2}}
\theoremstyle{compact.definition}
\def\th@newremark{\th@remark\thm@headfont{\bfseries}} % Bold the remark title
\theoremstyle{newremark}
\theoremstyle{compact.remark}
\newcommand{\clog}{\operatorname{\psi}}
\newcommand{\blup}{\operatorname{BLUP}}
\newcommand{\aic}{\operatorname{AIC}}
\newcommand{\cxEtrp}{H}
\newcommand{\knl}{\kappa}
\newcommand{\mco}{\xi} % for moment coord
\newcommand{\normp}[2][2]{\left\lVert#2\right\rVert_#1}
\newcommand{\norminf}[1]{\lVert#1\rVert_\infty}
\DeclarePairedDelimiterX{\abs}[1]{\lvert}{\rvert}{#1}
\newcommand{\size}[1]{\left\lvert#1\right\rvert}
\DeclareMathAlphabet{\calbnx}{U}{BOONDOX-calo}{m}{n}
\SetMathAlphabet{\calbnx}{bold}{U}{BOONDOX-calo}{b}{n}
\DeclareMathAlphabet{\calbnxb}{U}{BOONDOX-calo}{b}{n}
\title{Nonparametric Estimation of Repeated Densities with Heterogeneous Sample Sizes}	
	\author{
		Jiaming Qiu%
		\and 
		Xiongtao Dai% 
		\and 
		Zhengyuan Zhu \\
		Department of Statistics, Iowa State University
	}
	\author{}
\date{}
\begin{document}

\maketitle

\begin{abstract}
We consider the estimation of densities in multiple subpopulations, where the available sample size in each subpopulation greatly varies. 
This problem occurs in epidemiology, for example, where different diseases may share similar pathogenic mechanism but differ in their prevalence. 
Without specifying a parametric form, our proposed method pools information from the population and estimate the density in each subpopulation in a data-driven fashion.
Drawing from functional data analysis, 
low-dimensional approximating density families in the form of exponential families are constructed from the principal modes of variation in the log-densities.
Subpopulation densities are subsequently fitted in the approximating families based on likelihood principles and shrinkage. 
The approximating families increase in their flexibility as the number of components increases and can approximate arbitrary infinite-dimensional densities.
We also derive convergence results of the density estimates with discrete observations. 
The proposed methods are shown to be interpretable and efficient in simulation as well as applications to electronic medical record and rainfall data.
\end{abstract}

\noindent%
{\it Keywords:} 
Sparse functional data analysis; Functional principal component analysis; Shrinkage estimation;
Dimension reduction; Object data analysis
%; Kullback--Leibler divergence
\vfill

%\newpage
%\spacingset{1.7}

\section{Introduction}

Density estimation is one of the most fundamental tasks in statistics and machine learning. 
We consider the problem of repeated density estimation, namely, to estimate the distributions of a variable in multiple subpopulations with similar nature.
For example, the distributions of interest can be the age distributions of patients with different diseases, or the precipitation distributions in multiple regions. 
Repeated density functions can be modeled as the realizations of a random density element that is subject to the positivity and unit integral constraints. 
Moreover, density functions are in practice unobserved and are accessed through discrete samples, rendering additional difficulties for data analysis.
The primary objective of this work is to address these difficulties and derive reliable yet flexible density estimates for the subpopulations given discrete observations, in a fully data-driven and efficient manner.

Most of the existing work consider random densities that are either \emph{completely} observed or are reconstructed from \emph{densely} sampled observations. 
This type of density objects was first analyzed by \cite{knei:01} by directly applying functional principal component analysis (FPCA), a technique that has later been applied to analyze time-varying densities \citep{huyn:11,tsay:16,chu:18}. 
Recognizing the constraints in density functions, \cite{deli:11} and \cite{pete:16} proposed to analyze densities by utilizing one-to-one transformations to map them to an unconstrained space, where the transformed densities are then represented using FPCA and subsequently back transformed into densities.
Distinct geometries are generated on the space of densities by different transformations, such as the log-hazard and log-quantile density transformations \citep{pete:16,kim:20}, log transformations for analyzing densities as compositional data \citep{egoz:06,hron:16}; and the square root transformation with the Riemannian logarithm map \citep{sriv:07,dai:20:pp} which generates a Hilbert sphere geometry.

We instead consider fitting \emph{sparsely} observed random densities with as few as a handful of realizations available. 
Subpopulations with small sample sizes arise in a number of possible situations when the sampling mechanism dictates heterogeneous sample sizes. 
A first situation is when the availability and cost of observation highly varies among subpopulations \citep{wake:99}, with some groups easy but the rest hard to reach \citep{bone:14}.
This scenario is illustrated by the Medical Information Mart for Intensive Care (MIMIC) data \citep{john:16} in our first data application, where patient records are ample for common diseases such as diabetes and coronary heart disease but scarce for uncommon conditions.
A second generating mechanism is that the data collection process had been established for different duration in different subpopulations \citep{fith:14}, which is illustrated by our second data application considering rainfall recorded by weather stations around the Haihe River basin in northern China that differ in the number of years when data are available.

The literature on sparsely observed random densities is scant.
A closely related topic is replicated point processes, of which the intensity functions can be normalized into densities. 
\cite{pana:16} considered time-warping densely sampled Poisson point processes under a transportation geometry.
Given sparsely sampled replicated point processes, \cite{wu:13} proposed to pool information over multiple realizations and directly model the intensity functions using FPCA.
\cite{gerv:16} and \cite{gerv:19} considered semi-parametric models to decompose the intensity and the log-intensity functions, respectively. 

Our proposal is a nonparametric approach that directly targets the repeated densities, which are modeled as realizations of a  density-valued random element, and the discrete observations from each subpopulation are modeled as i.i.d. realizations from the corresponding density. 
Information is pooled over all subpopulations in a two-step fashion.
First, we identify low-dimensional structures to parsimoniously approximate the densities using data from subpopulations densely sampled by applying FPCA to the centralized log-transformed densities that lie in an unconstrained $L^2$ space.
When transformed back to densities, the low-dimensional structures make up exponential families, where the sufficient statistics are the eigenfunctions of the transformed densities, and 
each density element is compactly represented by its natural parameter or, equivalently, moment coordinate \citep{amar:16}. 
Second, a new and potentially sparsely observed subpopulation is fitted within the approximating family, obtaining maximum likelihood estimate (MLE), and the coordinate estimate is then shrunk towards the population using distributional information, further enforcing information borrowing.
Two shrinkage estimates, namely maximum \emph{a posteriori} (MAP) and best linear unbiased prediction (BLUP), are proposed based on the natural parameter and moment coordinate, respectively.

To illustrate the idea, suppose that random densities are generated from the normal distribution family $\calP = \cbrk{p_{\mu, \sigma} : \mu \in \dsR, \sigma > 0}$ with densities $p_{\mu, \sigma}(x) \propto \exp\left(x (\mu / \sigma^2) - x^2 / (2\sigma^2) \right)$ truncated to $x \in [-1, 1]$.
Observe that a random element $p$ taking values in $\calP$ satisfies $\log p \subset \spn\cbrk{1, x, x^2}$ and thus lies in a low-dimensional space.
The low-dimensional structure can be recovered by applying dimension reduction techniques such as FPCA on $\log p$. 
When inferring the distribution of a subpopulation, information borrowing is performed within the low-dimensional space and is analogous to that for the classical linear mixed effects model.
Remarkably, the proposed methods estimate the low-dimensional structure and pool information fully nonparametrically without imposing parametric shape assumptions.
A worked simulation example is included in \autoref{supp-example:rand.intercept} of the Supplementary Materials. 

The proposed methods combine the flexibility from non-parametric density estimate and the statistical efficiency and interpretability offered by parametric inference and information borrowing.
Ultimately, fitting a density with sparse observations follows likelihood principles in an estimated exponential family, so the required amount of observations by our proposed methods could be substantially fewer than what would otherwise be required for non-parametric estimates.
Moreover, the exponential family approximation has advantages in computation and interpretability.
% of the sufficient statistics which comes from eigenfunctions as well as fast computation. 
The approximating family increases in flexibility as its dimension increases and can approximate an arbitrarily complex density.
Rates of convergence of the proposed MLE under the Kullback--Leibler (KL) divergence and $L^1$ norm are established under the asymptotics that the number of training subpopulations, the sample size of each subpopulation, and the sample size of the new subpopulation diverge to infinity.
This result holds for general infinite-dimensional random densities under smoothness  and boundedness conditions.
If the random density actually lies in a finite-dimensional exponential family, then the proposed method converges under KL divergence to the truth with a near-parametric rate.
If the sample size in the new subpopulation is fixed, the proposed MLE converges to the MLE in the best approximating exponential family.

The proposed models and estimation are introduced in \autoref{sec:method}.
Numerical properties of the proposed methods are investigated and compared to alternative parametric and nonparametric density estimates in \autoref{sec:simulation}. 
Data illustrations with age-at-admission of ICU patients and rainfall data in the Haihe river basin are included in \autoref{sec:real.data}. 
Theoretical results are stated in \autoref{sec:theory}. Additional simulation, discussion, and detailed proofs are included in the Supplemental Materials.

\section{Models and Methods}\label{sec:method}

\subsection{Data Setup and Goals}

Let $\cbrk{X_{ij}: j = 1, \dots, N_i, i = 1, \dots, n}$ be the observations of a continuous variable in subpopulation $i = 1, \dots, n$; e.g. the age-at-admission to the Intensive Care Unit (ICU) as described in \autoref{sec:mimic} grouped by primary diagnosis. 
Here $X_{i1}, \dots, X_{iN_i}$ are i.i.d. observations from the $i$th subpopulation with density $p_i$.
Densities $p_1, \dots, p_n$ are modeled as i.i.d. realizations of a random density $p$ taking values in family $\calP$, where 
$\calP$ consists of densities supported on a common compact interval $\calT \subset \dsR$. %Such data structure is rather common, for example, responses from different treatment groups, data generated from stratified sampling, etc. 
For example, there may exist latent characteristics that determine the susceptibility age profiles of different diseases, so the age distribution of a disease can be regarded the realization of a density-valued random element.
To emphasis the grouping structure, we refer to $p_i$ as the density of the $i$th \emph{subpopulation}, while the distribution of the random density $p$ as the \emph{population}.

Our goal is to estimate the density $p_0$ in a subpopulation with a few i.i.d. observations $X_{0j}$, $j = 1, \ldots, N$, where the number of observations $N$ can potentially be small.
For example, this problem occurs when one want to estimate the age distribution for an uncommon disease. 
We will first obtain low-dimensional structures of $\calP$ from subpopulations where ample observations are available, e.g. from records for more common conditions, and apply the structure to estimate the density in the subpopulation with a small sample size.

\subsection{Proposed Repeated Density Estimation}

\subsubsection{Constructing Approximating Family}

We propose to approximate distribution family $\calP$ with low rank exponential families by utilizing a functional data analytic approach. 
The scenario where the distribution of the random density $p$ is given is discussed here, and the scenario where the distribution is not directly available is described in \autoref{sssec:presmooth}.

The collection of densities form an infinite-dimensional constraint manifold instead of a Hilbert space.
To borrow from the rich literature studying function data as objects in a Hilbert space, we follow a transformation approach \citep{pete:16} and analyze the transformed densities as Hilbertian elements. 
The centralized log-transformation \citep{hron:16} $\clog: \calP \rightarrow L^2(\calT)$ maps a positive density $q \in \calP$ into the $L^2$ space, obtaining
\begin{equation}\label{def:clog}
(\clog q)(t) = \log q(t) - \frac{1}{\size \calT}\int_\calT \log q(x) dx,\, t \in \calT.
\end{equation}
%To simplify notations, let $1$ and $e$ denote constant functions on $\calT$ with value 1 and $\size{\calT}^{-1/2}$, respectively, so that $\int_\calT \log q(x) dx / \size\calT = \size{\calT}\inv \inner{1}{\log q} = \inner{e}{\log q} e$.

To derive low rank approximations to the densities, we apply the Karhunen--Loève expansion to the transformed trajectories $f = \clog p$ in $L^2(\calT)$, obtaining
\begin{equation}\label{eq:kl.expan.logp}
f(t) = (\clog p)(t) = \mu(t) + \sum_{k = 1}^\infty \eta_k \vphi_k(t), t \in \calT,
\end{equation}
where $\mu(t) = \E f(t)$ is the mean function, $\eta_k = \int_\calT (f(t) - \mu(t)) \vphi_k(t) dt$ is the $k$th component score, and $(\lambda_k, \vphi_k)$ is the $k$th eigenvalue--eigenfunction pair of the covariance function $G(s, t) = \cov(f(s) - \mu(s), f(t) - \mu(t))$ satisfying $\lambda_k \vphi_k(t) = \int_\calT G(s, t)\vphi_k(s) ds$ on $t \in \calT$, for $k=1,2,\dots$. The eigenfunctions are orthonormal, and the eigenvalues are non-negative and non-increasing, satisfying $\lambda_1 \geq \lambda_2 \geq \dots \geq 0$.

Low rank approximating density families to $\calP$ are then obtained by the back-transformation $\clog\inv$ and take the form of exponential families. 
For $K = 1, 2, \dots$, the $K$-dimensional approximation $\calP_K$ to $\calP$ is the exponential family
\begin{gather}
\calP_K  = \cbrk{p_\theta : \theta \in \dsR^K, B_K(\theta) < \infty} \quad \text{ with}\label{def:calpk}\\
p_\theta(t)  = \exp\left(\mu(t) + \sum_{k=1}^{K} \theta_k \vphi_k(t) - B_K(\theta)\right),\; t\in \calT, \label{def:ptheta}
\end{gather}
where
$
B_K(\theta) = \log\left(\int_{\calT}\exp\left(\mu(t) + \sum_{k = 1}^K \theta_k \vphi_k(t) \right)dt \right)
$
is the normalizing constant.
%and $\theta \in \dsR^K$ can be any value satisfying $B_K(\theta) < \infty$. 
Family $\calP_K$ is identifiable as will be shown in \autoref{supp-prop:identifiable} of supplement. 
The random density $p$ is thus approached by its truncated version 
\begin{equation} \label{eq:pK}
p_{K}(t) \propto \exp(\mu(t) + \sum_{k \leq K} \eta_{k} \vphi_k(t)),\, t\in\calT
\end{equation}
lying in $\calP_K$.

The model components in \eqref{def:ptheta} enjoy interpretation within an exponential family:
$\theta_k$ is the natural parameter which compactly describes the density, the eigenfunctions $\vphi_k$ is the sufficient statistic, 
and mean $\mu$ is the baseline measure.
%Exponential family $\calP_K$ is easy to communicate for downstream tasks as will be illustrated in \autoref{sec:mimic}. \xd{This sentence is a bit unclear.}
Analogous to the case for FPCA, the leading eigenfunctions $\vphi_k$, $k=1,\dots,K$ encode the principal modes of variation in the log-densities, and the $\eta_k$ are the scores explaining the most variation;
$\calP_K$ provides the most parsimonious $K$-dimensional description of the random (log-)densities.
Hence, performing density estimation within $\calP_K$ will effectively borrow information from the typical shapes of the densities displayed in the leading eigenfunctions or sufficient statistic.

%Note that by definition in \cite{egoz:06} and our construction in \eqref{def:calpk}, the distance between $p_{\theta_1}, p_{\theta_2} \in \calP_K$ under Aitchison geometry is $\normp{\theta_1 - \theta_2}$, which is equal to $L_2$ norm of $\clog p_{\theta_1} - \clog p_{\theta_2}$, corresponding to the isometric isomorphism between Bayes space and $L_2$ as in \cite{boog:14}.

\subsubsection{Fitting Densities within Approximating Families}

Given a small sample $X_{0j}$, $j = 1, \ldots, N$ from a new subpopulation, an estimate for the underlying density $p_0$ is then obtained as a best fit within the approximating family $\calP_K$.
To assess the goodness of fit, various approaches are available from information theory \citep[e.g.][]{amar:00, amar:16}, and we adopt the information loss as quantified by 
the Kullback--Leibler (KL) divergence, which is defined for two positive densities $q_1$, $q_2$ on $\calT$ as
\begin{equation}\label{def:kldiv}
	\kldiv{q_1}{q_2} = \int_\calT q_1(t) \log\left(
		\frac{q_1(t)}{q_2(t)}
	\right)dt.
\end{equation}
KL divergence $\kldiv{q_1}{q_2}$ from $q_1$ to $q_2$ quantifies the information loss when one approximates $q_1$ by $q_2$. 
To estimate $p_0$, we propose to use the best approximating element in $\calP_K$ minimizing the KL divergence from the empirical distribution.
This leads to the maximum likelihood estimate (MLE) defined by $p_{\rng \theta} \in \calP_K$, where
\begin{align}\label{def:fpca.mle.pop}
    \rng \theta &= \argmax_{\theta: B_K(\theta) < \infty}\left( \theta^T \bar\vphi_0- B_K(\theta) \right),
\end{align}
and $\bar \vphi_0 \in \dsR^K$ is the sufficient statistic constructed from the sample $X_{0j}$, in which the $k$th element is $N\inv \sum_{j \leq N} \vphi_k(X_{0j})$, $k = 1, \ldots, K$.

\subsubsection{Incorporating Population-level Information}

In addition to the typical shapes of random densities, we borrow information from the distribution of these random densities in the population
%, which, from different perspective, can be seen as knowledge on a latent layer in hierarchical setup, or simply as prior. 
to reduce variation when fitting a new density.
This will result in a significant improvement especially if the sample size in the new subpopulation is small.
%
%Accessing the distribution of the random density $p$ via its component scores, we propose two additional estimators for fitting the sparsely observed densities, namely the maximum a posteriori (MAP) and best linear unbiased predictor (BLUP).

The distribution of $p$ is captured by those of the component scores $\eta_k$, $k=1,\dots, K$. 
Incorporating such distributional information into the likelihood function, we propose maximum \emph{a posteriori} (MAP) estimate $\rng p_{\text{MAP}} = p_{\rng \theta_{\text{MAP}}} \in \calP_K$, where the parameter estimate $\rng \theta_{\text{MAP}}$ is defined as
\begin{align}\label{def:fpca.map.pop}
\rng \theta_{\text{MAP}} &= \argmax_{\theta : B_K(\theta) < \infty} \left(\theta^T \bar \vphi_0 - B_K(\theta) + \log \pi(\theta)\right),
\end{align}
and $\pi$ is the unconditional density of the $\eta_k$. 
Analogous to the PACE procedure \citep{yao:05}, $\pi$ is constructed using independent marginal distributions being normal distributions with mean $\E \eta_k=0$ and variance $\Var\eta_k=\lambda_k$, $k=1,\dots,K$.
Here $\rng \theta_{\text{MAP}}$ is a shrinkage of the MLE $\rng \theta$ towards the population $\pi$. 

To accommodate non-normal unconditional distributions for the $\eta_k$, we propose a second approach to incorporate the population-level knowledge by working with the moment parameterization. 
%It will calculate the BLUP of moment parameters with sample sufficient statistics, where the BLUP is obtained with prior knowledge from training steps in a similar manner as regression.
%, in the end, the moment parameters could be translated into natural parameters for ease of inspection. 
The moment parameter, also referred to as the moment coordinate, for an element $p_\theta$ in exponential family $\calP_K$ is $\xi = (\xi_1, \dots, \xi_K) \in \dsR^K$ where $\mco_k = \int_\calT \vphi_k(t) p_\theta(t) dt$. 
The moment parameter is in one-to-one correspondence with natural parameters \citep[see, e.g., ][]{amar:16}, 
%since $B_K(\eta)$ is a convex function of $\eta$ as $\partial^2 B_K(\eta)$, being covariance matrix of sufficient statistics, is positive definite, therefore $\partial B_K(\eta)$ is a one-to-one mapping 
noting
\begin{equation}\label{eq:m.e.translate}
	\mco = \partial_\theta B_K(\theta).
\end{equation}
%,solving which effectively serves as a translator between moment parameters and natural parameters. 
For brevity, we use $\mco(\cdot): \theta \mapsto \mco(\theta)$ and $\theta(\cdot): \mco \mapsto \theta(\mco)$ to denote the one-to-one mappings derived from \eqref{eq:m.e.translate}.

Motivated by the fact that the MLE of the natural parameter is the method of moments estimate under moment parameterization, we consider shrinkage estimate based on the the moment parametrization. 
Let $\eta_0$ be the natural parameter of the truncated random densities $p_{0, K} \in \calP_K$ of $p_0$ according to \eqref{eq:pK} and $\tau_0 = \mco(\eta_0) \in \dsR^K$ the corresponding moment coordinate. 
%, where $\mco(\cdot): \theta \mapsto \mco$ is the mapping induced by \eqref{eq:m.e.translate}. 
We propose to estimate the moment parameter $\tau_0$ by the best linear unbiased predictor (BLUP) of $\tau_0$ given the sample sufficient statistic $\bar \vphi_0$, defined as 
\begin{align}\label{def:fpca.blup.pop}
    \rng \mco_{\text{BLUP}} &= \blup(\tau_0 \mid \bar \vphi_0) \defas \cov(\tau_0, \bar \vphi_0) \var(\bar \vphi_0)\inv (\bar \vphi_0 - \E\bar \vphi_0) + \E\tau_0 \nonumber\\
    &= \var(\tau_0) \var(\bar \vphi_0)\inv (\bar \vphi_0 - \E \tau_0) + \E\tau_0,
\end{align}
where the last equality is due to $E (\bar \vphi_0 \mid \tau_0) = \tau_0$ and double expectation.
Note that here the covariance and expectation are computed under the joint distribution of $(\tau_0, \bar\vphi_0)$. 
We thus obtain a plug-in estimate $\rng\theta_{\text{BLUP}} = \theta(\rng \mco_{\text{BLUP}})$ of the natural parameter.

The MAP \eqref{def:fpca.map.pop} and BLUP \eqref{def:fpca.blup.pop} estimates not only borrow typical shapes of densities via the sufficient statistic $\vphi$ analogous to the MLE \eqref{def:fpca.mle.pop}, but also take advantage of the distribution of the random density through the component scores. 
We refer to the information embedded in the component scores of log-densities as population-level information. 
The MAP \eqref{def:fpca.map.pop} and BLUP \eqref{def:fpca.blup.pop} are, respectively, a shrinkage of the MLE \eqref{def:fpca.mle.pop} estimate towards the mean under the natural and the moment coordinate.
As will be demonstrated in our numerical investigation later, this extra shrinkage is precious for reducing information loss especially given small samples.

\subsubsection{Pre-smoothing and Unknown Distribution of $p$}
\label{sssec:presmooth}

Oftentimes, the random densities $p_1,\dots,p_n$ and their distribution are unavailable but only discrete samples $\cbrk{X_{ij}: j = 1, \dots N_i; i = 1 \dots, n}$ are accessible. 
In this case, we utilize \emph{pre-smoothing} to construct pilot density estimates for the samples, analogous to the approach taken by \cite{pete:16, han:20}.
While many density estimators may be applied, for example kernel density estimate (KDE) and logspline \citep{koop:91}, for theoretical consideration we adopt the weighted KDE 
\begin{equation}\label{def:weighted.KDE}
\ck p_i(t) = 
\left(
\int_\calT 
%\frac{1}{N h}
\sum_{j = 1}^{N_i} \knl\left(\frac{x - X_{ij}}{h}\right) w(x, h) dx
\right)\inv
%
%\frac{1}{N h} 
\sum_{j = 1}^{N_i} \knl\left(\frac{t - X_{ij}}{h}\right) w(t, h), \; t \in \calT
\end{equation}
to handle boundary bias, where $\knl$ is a kernel function and $w(t, h) = 1 / \int_{(t-b)/h}^{(t-a)/h} \knl(u) du $ is the weight for $\calT = [a, b]$. 
In the pre-smoothing step, the pilot density estimator is applied on each sample to obtain pre-smoothed densities $\ck p_1, \dots, \ck p_n$. 

FPCA is then performed on the transformed densities $\ck f_i = \clog \ck p_i$, $i = 1, \dots, n$ to obtain sample mean  $\hmu(t) = n\inv \sum_{i=1}^{n} \ck f_i(t)$, sample covariance function $\h G(s, t) = n\inv \sum_{i=1}^{n} (\ck f_i(s) - \h\mu(s))(\ck f_i(t) - \h\mu(t))$, the associated eigenvalues  $\hl_k$ and eigenfunctions $\hphi_k$, and the component scores $\h\eta_i \in \dsR^K$ in which the $k$th element is $\h\eta_{ik} = \inner{\ck f_i - \h\mu}{\h\vphi_k}$ for $k = 1, \dots, n - 1$. 
A sample version of the approximating families $\h\calP_K$ is obtained as
\begin{gather}
\h\calP_K = \cbrk{\h p_\theta : \theta \in \dsR^K, \h B_K(\theta) < \infty}, \text{ with} \label{def:calpk.presmooth}\\
\h p_\theta(t) = \exp\left(\h\mu(t) + \sum_{k = 1}^K \theta_k \h\vphi_k(t) - \h B_K(\theta)\right),\, t \in \calT,
\end{gather}
where $\h B_K(\theta) = \log\left(
\int_\calT \exp\left(\h\mu(t) + \sum_{k \leq K} \theta_k \h\vphi_k(t) dt\right)
\right)$ is the normalizing constant.
Inference for new samples are conducted based on $\h \calP_K$ with sufficient statistic $\h\vphi_{0,\ave} \in \dsR^K$, in which the $k$th element is $N\inv \sum_{j = 1}^N \h\vphi_k(X_{0j})$; to avoid double over-scripts here we use subscript ``ave'' to denote the average. The distribution of random densities can be accessed via the empirical distribution of the sample component scores. 
The proposed estimators in fully sample-based versions are then defined using the estimated quantities.
The MLE and MAP within $\h\calP_K$ are $\h p_{\h \theta_{\text{MLE}}}$ and $\h p_{\h \theta_{\text{MAP}}}$, respectively, where
\begin{align}
\h \theta_{\text{MLE}} &= \argmax_{\theta: \h B_K(\theta) < \infty}\left( \theta^T \h\vphi_{0,\ave}- \h B_K(\theta) \right), \label{def:fpca.mle}\\
\h \theta_{\text{MAP}} &= \argmax_{\theta : \h B_K(\theta) < \infty} \left(\theta^T \h\vphi_{0,\ave} - \h B_K(\theta) + \log \h\pi(\theta)\right), \label{def:fpca.map}
\end{align}
in parallel with  \eqref{def:fpca.mle.pop} and \eqref{def:fpca.map.pop},
and $\h\pi$ is the product of normal densities with mean
$0$ and variance $s^2_{\h\eta, k}$, the sample variance of $\cbrk{\h\eta_{ik}, i = 1, \dots, n}$, for $k = 1, \dots, K$.

For BLUP, 
$\E\tau_0$ and $\var(\tau_0)$ in \eqref{def:fpca.blup.pop} are estimated by the sample mean $\h \tau_\ave = n\inv \sum_{i \leq n}\h\tau_i$ and covariance matrix $\Sigma_{\h\tau}$ of moment coordinates $\cbrk{\h\tau_i = \mco(\h\eta_i) : i = 1, \dots, n}$ in $\h\calP_K$.
Further decompose $\var(\bar\vphi_0) = \E\var(\bar\vphi_0 \mid \tau_0) + \var(\tau_0)$, and estimate the first term via integration by
\begin{equation*}%\label{eq:cond.var.vphi}
\Sigma_{\h\vphi_\ave} = \frac{1}{nN} \sum_{i = 1}^n
\int_\calT \left(\h\vphi(t) - \h\tau_i\right) \left(\h\vphi(t) - \h\tau_i\right)^T \ck p_i(t) dt,
%= \frac{1}{nN} \sum_{i = 1}^n \partial^2 \h B_K(\h\eta_i),
\end{equation*}
where $\h\vphi: t \mapsto (\h\vphi_1(t), \dots, \h\vphi_K(t)) \in \dsR^K$ is the sufficient statistic. 
In combine, the plug-in BLUP estimate for the moment parameter within $\h\calP_K$ is obtained as 
\begin{equation}\label{def:fpca.blup}
\h \xi_{\text{BLUP}} = \Sigma_{\h\tau} \left(\Sigma_{\h\vphi_\ave}  + \Sigma_{\h\tau} \right) ^{-1} (\h \vphi_{0, \ave} - \h\tau_\ave) + \h\tau_\ave,
\end{equation}
and the associated natural parameter $\h\theta_{\text{BLUP}} = \h\theta(\h \mco_{\text{BLUP}})$ is the solution to $\h\mco_{\text{BLUP}} = \partial_\theta \h B_K(\theta)$.

When the random densities $p_1, \ldots, p_n$ are observed in their entirety, a scenario considered for the interest of theory and referred to as the subpopulations being \emph{completely observed}, it suffices to skip the pre-smoothing step and proceed with the observed densities.
In this case, FPCA is performed on the transformed trajectories $f_i = \clog p_i$, $i = 1, \dots, n$ to obtain sample mean and covariance as $\tl \mu(t) = n\inv \sum_{i \leq n} f_i(t)$ and $\tl G(s, t) = n\inv \sum_{i \leq n} (f_i(s) - \tl\mu(s)) (f_i(t) - \tl\mu(t))$, eigenvalues and eigenfunctions $\tl\lambda_k$ and $\tl\vphi_k$, and sample component scores $\tl\eta_{ik} = \inner{f_i - \tl \mu}{\tl\vphi_k}$, $k = 1, \dots, n - 1$. 
The approximating family $\tl\calP_K$ with completely observed densities is defined as
\begin{gather}
\tl \calP_K = \cbrk{\tl p_\theta : \theta\in\dsR^K, \tl B_K(\theta) < \infty}, \text{ with}\label{def:calpk.comp}  \\
\tl p_\theta(t) = \exp\left(\tl\mu(t) + \sum_{k = 1}^K \theta_k \tl\vphi_k(t) - \tl B_K(\theta)\right),\, t \in \calT
\end{gather}
for $K=1,2,\dots$, where $\tl B_K(\theta) = \log \int_\calT \exp\left(\tl\mu(t) + \sum_{k \leq K} \theta_k \tl\vphi_k(t) dt\right)$ is the normalizing constant.
The proposed MLE, MAP and BLUP are then defined analogously to that within $\h\calP_K$.
We use tilde overscripts to denote estimated quantities with fully observed densities.

We refer to the process of constructing approximating exponential families from the discrete observations from $p_1,\dots, p_n$ as the \emph{training} step, which emphasizes that the $K$-dimensional families $\h \calP_K$ are flexibly derived from the data despite themselves being parametric families compactly described by a few natural parameters. 
The random densities $p_1, \ldots, p_n$ are referred to as the training densities associated with the training subpopulations, and the corresponding samples as training samples.
For performance assessment, we call the process constructing density estimates from new observations within the approximating families the \emph{fitting} or \emph{testing} step, and correspondingly, define  testing subpopulation, density, and sample.

The number of components $K$ controls the flexibility of $\h \calP_K$ and is the key tuning parameter trading off bias and variance when fitting a density.
We propose to select $K$ using a penalized approach by considering the Akaike information criterion (AIC), and set $K=K^*$ which minimizes $\aic(K) \defas 2 K - 2\log \h p_K$, where $\h p_K$ is the density estimates within $\h\calP_K$ using either \eqref{def:fpca.mle}, \eqref{def:fpca.map}, or \eqref{def:fpca.blup}. 
In practice, we observe that the optimal $K^*$ often increases as the fitting sample size increases.
Intuitively, the proposed component selection automatically utilizes a more flexible model given a larger sample, in which case the additional variance that comes with the flexibility becomes more affordable. 

The proposed procedure is computationally efficient, since the proposed MLE, MAP, and BLUP estimates are performed within exponential families, so the optimization tasks in obtaining  \eqref{def:fpca.mle}--\eqref{def:fpca.blup} are convex, and the discrete samples are succinctly summarized using the sufficient statistics.

\section{Simulation Studies}
\label{sec:simulation}

We study the numerical properties for the proposed methods and alternative parametric and non-parametric density estimators under various scenarios. 
Independent training samples $\cbrk{X_{ij}: j = 1, \ldots, N_i}$ were drawn from densities $p_i$, $i=1,\dots,n$, respectively, where the $p_i$ are independent copies of a random density $p$. 
Independent testing samples $\cbrk{X^*_{lj}: j = 1, \ldots, N_l^*}$ were then drawn from additional independent copies $p_l^*$ of $p$ for $l = 1, \dots, n^*$. 
Typically, training sample sizes were larger, e.g. $N_i = 200$, while testing samples were smaller, e.g., $N_l^* = 10$.
The small sample sizes in the testing samples pose a challenge for the density estimators compared.
The underlying family $\calP$ and the distribution of $p \in \calP$ varied between scenarios. The experiment under each scenario was repeated 500 times.

The proposed methods \eqref{def:fpca.mle}, \eqref{def:fpca.map}, and \eqref{def:fpca.blup} are referred to as FPCA\_MLE, FPCA\_MAP, and FPCA\_BLUP. 
Our proposals were compared against the maximum likelihood estimate within the ground truth family $\calP$ as an idealized parametric approach, referred to as MLE; non-parametric methods including KDE \citep{shea:91} and logspline \citep{koop:91}; and a repeated point processes (RPP) approach proposed by \cite{gerv:19}.
These methods do not require training.

The mean KL divergence for the density estimates was used to assess the overall estimation performance. 
For density estimate $\h p^*_l$ for the $l$th testing sample, information loss was evaluated by KL divergence $\kldiv{p^*_l}{\h p^*_l}$ defined in \eqref{def:kldiv}, and we examined the average 
$$
\text{MKL} = \frac{1}{n^*} \sum_{l=1}^{n^*} \kldiv{p^*_l}{\h p^*_l}
$$
over $n^*$ testing samples to quantify performance.
%for performance w.r.t. the underlying $\calP$ and the distribution of random densities.
Smaller MKL indicates better estimation performance. 
Results for an alternative error metric, the mean integrated squared errors (MISE), and additional simulation cases are included in the Supplementary Materials.

\subsection{Flexible Exponential Families}\label{ssec:simu.exponential}

We first set the density families to be either 
the truncated normal or the bimodal distributions.
For the truncated normal scenario, each sample was generated from $N(\mu, \sigma^2)$ truncated on $[-3, 3]$ with $\mu \sim \text{Unif}(-2, 2)$ and $\sigma \sim \text{Unif}(2, 4)$.
For the bimodal scenario, each sample was generated according to density
$p(x) \propto \exp(
(4 + \theta) x - (26.5 + \theta) x^2
+ 47 x^3 - 25 x^4
), x\in [0, 1]$ 
with $\theta \sim \text{Unif}(0, 10)$.
In each Monte Carlo experiment, $n = 50$ training samples of size $N_i = 200$ and $n^* = 100$ testing samples with sizes $N=10$ or 50 were created.

The mean KL divergence in \autoref{fig:bim.trnml.boxplot} shows that the proposed FPCA\_MLE using either KDE as pre-smoother or with the complete training densities observed both produce estimates comparable to MLE without requiring explicit specification of a parametric family.
Among all the non-parametric density estimators, the proposed methods are shown to be stable and best-performing, especially when the testing samples are small, in which case the proposed methods outperformed MLE by borrowing strength among different samples, whereas MLE does not pool information.
In particular, for truncated normal family with $N^* = 10$, on average the information loss of the proposed MAP and BLUP were around 20\% less compared to that of MLE within ground truth family, and more than 40\% less compared to that of KDE.

\begin{figure}[!h]
	\centering
	\includegraphics[width=1\textwidth]{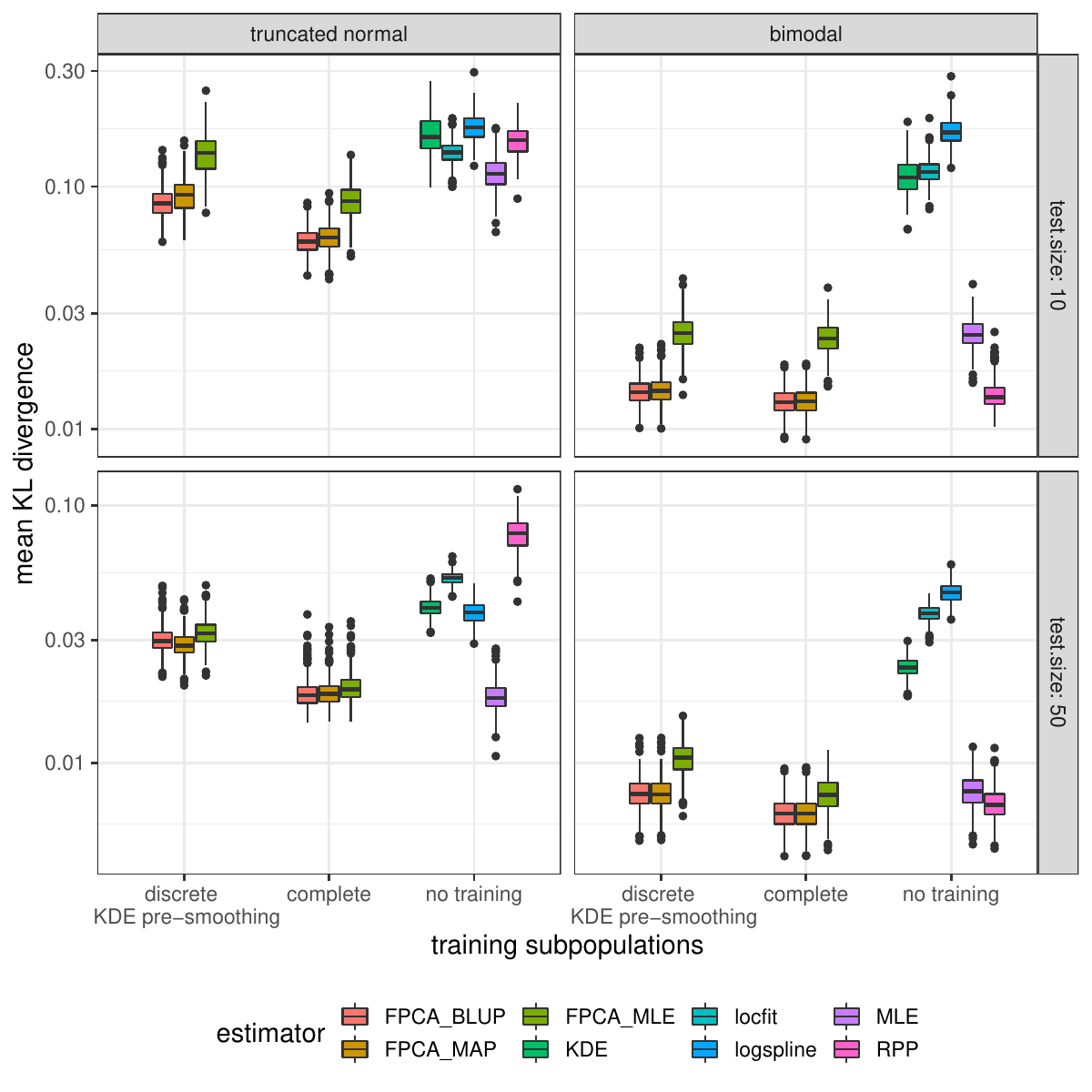}
	\spacingset{1}
	\caption{\textit{Boxplots of mean KL divergence. RPP, repeated point processes approach \citep{gerv:19}; locfit, local polynomial density estimate \citep{load:96}; logspline, adaptive logspline \citep{koop:91}. Both discrete training samples of sizes $N_i = 200$ with KDE pre-smoothing and completely observed densities are considered for our proposed approaches, namely FPCA\_MLE, FPCA\_MAP, and FPCA\_BLUP.
	}}
	\label{fig:bim.trnml.boxplot}
\end{figure}

%\clearpage

\subsection{Gaussian Mixture Family}\label{example:mixGauss}

Contrary to the previous scenario where the underlying families were exponential families, 
we next estimate densities from a family consisting of mixtures of Gaussian distributions, which forms a multi-modal non-exponential family.
This is to further demonstrate the flexibility of the proposed methods to a ``mis-specified'' case and also demonstrate the data-adaptive selection of the number of components. % to fit the densities.

The samples were generated following a mixture of three Gaussian distributions with density $p(x) \propto \sum_{l = 1}^{3} \theta_l \phi((x - \mu_l) / \sigma_l)$ truncated to $x \in [-3, 3]$, where $\phi$ is the standard normal density and the mixture probabilities $(\theta_1, \theta_2, \theta_3)$ were generated from $\text{Dirichlet}(1/3, 1/3, 1/3)$. For $l = 1, 2, 3$, we generated the mixture mean $\mu_l$ from $\text{Unif}(-5, 5)$ and standard deviation $\sigma_l$ from $\text{Unif}(0.5, 5)$.
Here, instead of performing a direct MLE which is known to be unstable for mixture distributions, the expectation--maximization (EM) algorithm in the variant of \cite{lee:12} is used as the baseline parametric density estimator.

The left panel of \autoref{fig:gauss.mixture} compares the relative error of various density estimates to that of the EM algorithm as a baseline reference.
The proposed methods enjoyed the least information loss for all but the largest sample size $N^*=500$ when EM method became the best. 
When the testing sample size $N^* = 25$, on average, the information loss of proposed MAP and BLUP were 20\% less than that of RPP, and they were all at least five times better compared to the EM algorithm.
Remarkably, this is achieved by only utilizing the first few components on average, as demonstrated in the right panel of \autoref{fig:gauss.mixture}.
This shows that the proposed methods are effective alternatives to the EM algorithm, which overcome local maximum issues by performing convex optimization within approximating exponential families.
In addition, by using the proposed AIC method for component selection, the proposed estimates utilized more components as the size of fitting sample increased, suggesting that the AIC selection adapts to the increased complexity of the underlying family as unveiled by a larger sample. 
Indeed, the larger the sample size, the more representative the sample will be for the underlying distribution, which has 8 free parameters in our setup. 
By using more components, the proposed methods can accommodate the increased complexity shown by the data such as multi-modality, which may otherwise be hidden when fitting with fewer components.

%\clearpage

\begin{figure}[!h]
	\includegraphics[width=0.95\textwidth]{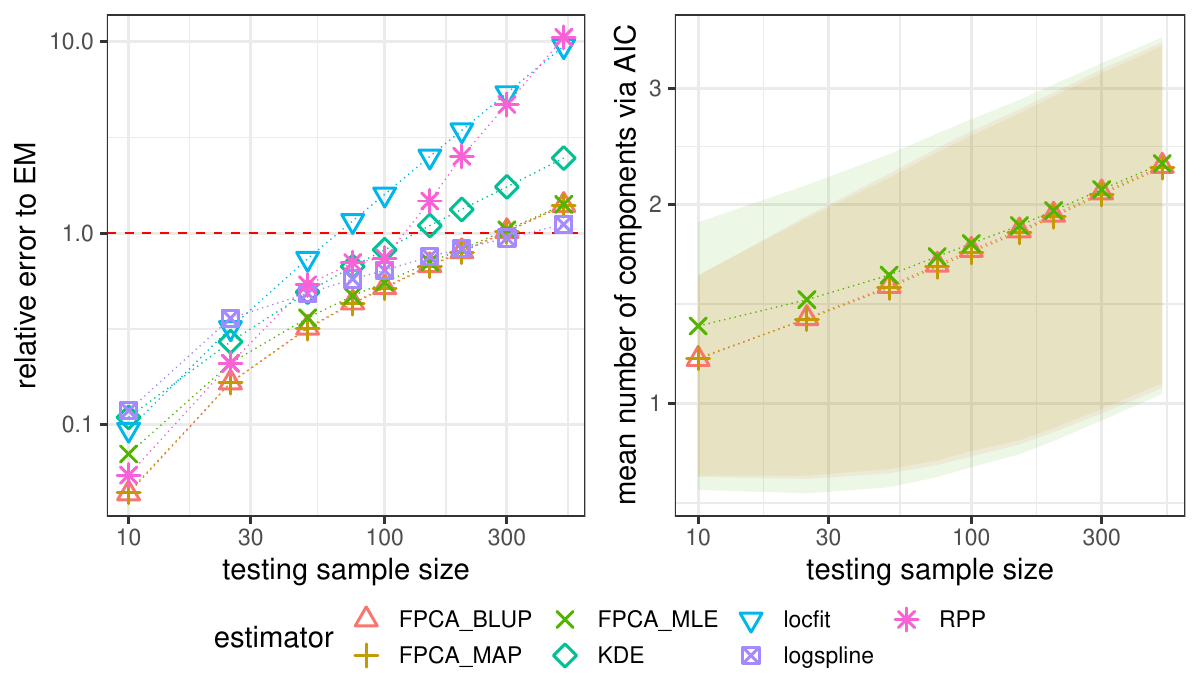}
	\spacingset{1}\caption{\textit{
%	Change of performance and number of components $K$ selected via AIC as testing samples getting larger.
	The left panel shows the ratios of the average mean KL divergence of the compared methods over that of the EM algorithm, where a ratio smaller than 1 indicates better performance than EM. 
	The right panel displays the average number of components $K$ selected by the proposed AIC method, with ribbons indicating one standard deviation around the means. 
	Axes in both panels are drawn in $\log10$ scale. 
	Training samples had sample sizes $N_i=50$, and logspline was used as the pre-smoother. 
	RPP, repeated point processes approach \citep{gerv:19}; locfit, local polynomial density estimate \citep{load:96}; logspline, adaptive logspline \citep{koop:91}; FPCA\_MLE, FPCA\_MAP, and FPCA\_BLUP, the proposed methods in the MLE, MAP, and BLUP variants.
}}
	\label{fig:gauss.mixture}
\end{figure}

%\clearpage

\section{Real Data Applications}\label{sec:real.data}

\subsection{MIMIC Data}\label{sec:mimic}

We consider the age-at-admission distributions of patients with critical conditions in the Medical Information Mart for Intensive Care (MIMIC) data \citep{john:16}, collected from patients admitted to critical care units at a large hospital in Boston between 2001 and 2012. 
The number of patients under different primary diagnosis are highly different.
Ample data points are available to establish a precise non-parametric density estimate for common diseases, as shown in the left panels of \autoref{fig:mimic.subpop}, whereas observations for uncommon conditions are scarce and cannot afford flexible non-parametric approaches such as KDE, as displayed in right panels of \autoref{fig:mimic.subpop}. Estimating the densities in the hard-to-observe subpopulations of patients with uncommon conditions is the emphasis in our analysis.

\begin{figure}[!h]
	\centering
	\includegraphics[width=1\textwidth]{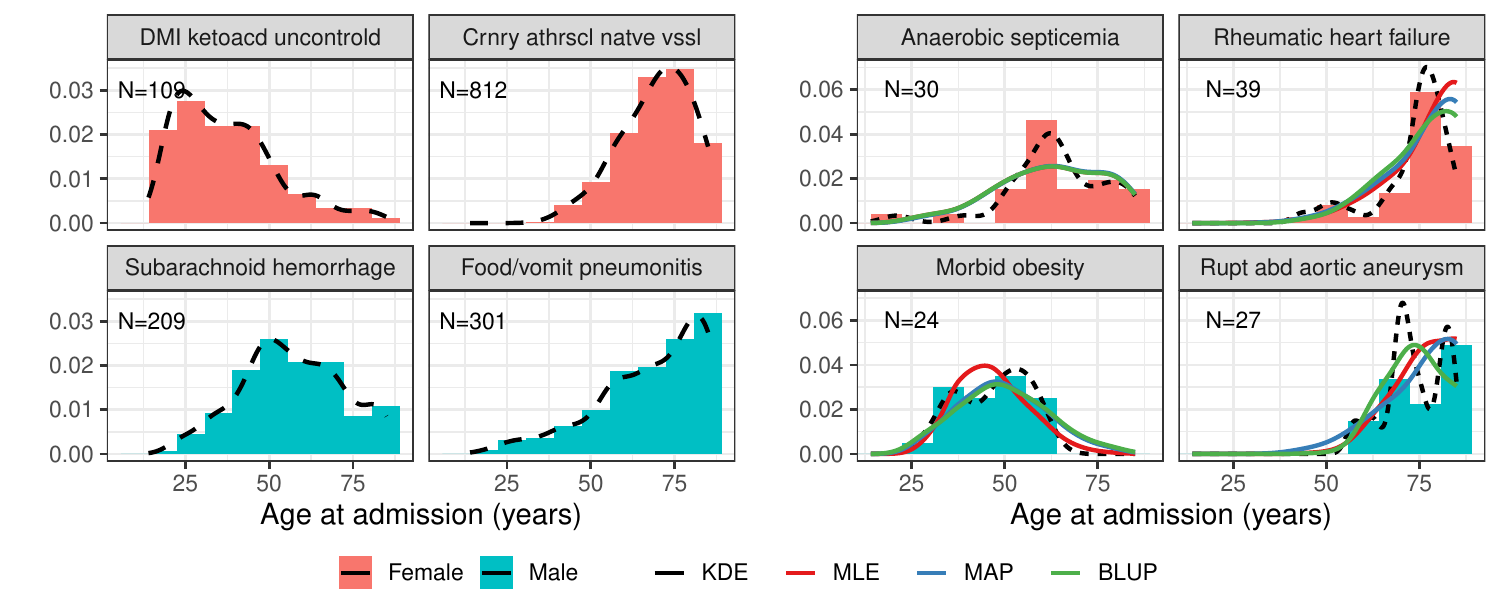}
	\spacingset{1}\caption{\textit{Each panel shows the age-at-admission distribution of either male or female patients with a different primary diagnosis, defining a subpopulation. 
		Four training subpopulations with ample observations were shown in the left panels, and four testing subpopulations with scarce observations are shown in the right panels. 
		Along with the KDE, the proposed density estimates MLE, MAP, and BLUP are displayed for the testing subpopulations. KDE, kernel density estimate; MLE, MAP and BLUP, the proposed density estimates.
		%	As for common names of the diagnosis, for example, ``DMI ketoacd uncontrold'' is adolescent diabetes; ``Crnry athrscl natve vssl'' is coronary artery disease; ``Rupt abd aortic aneurysm'' is abdominal aortic aneurysm rupture. 
	}}
	\label{fig:mimic.subpop}
\end{figure}

Gender and diagnosis code (ICD9) combinations were used to define the subpopulations, where we excluded birth-related, unclassified, or unspecified diagnosis. 
A total of $940$ subpopulations with more than 5 observations each were analyzed, among which $75$ subpopulations with at least $75$ observations were used as the training while the rest as testing. % set and the rest $865$ subpopulations as the testing set.
For example, the upper left panel in \autoref{fig:mimic.subpop} shows a training sample formed by $109$ female patients with juvenile type diabetes with ketoacidosis (DMI ketoacd uncontrold), while the bottom right panel shows a testing sample consisting of 27 male patients with abdominal aortic aneurysm rupture (Rupt abd aortic aneurysm).
We focused on estimating the densities supported in the age range of $[14, 85]$. 
KDE was utilized as the pre-smoother and the bandwidth was set to the median of the bandwidths in the training samples selected by the method of \cite{shea:91}. 

As illustrated in right panels of \autoref{fig:mimic.subpop}, the proposed density estimates produced sensible results given small test samples, which were smoother than the estimates given by KDE and avoided multiple spurious modes.  

The modes of variation in the density functions obtained by varying each of the first three natural parameters of $\h\calP_K$ are shown in the left panels of \autoref{fig:mimic.shrink.mov}.
The first three modes of variations correspond to a shift in the age distribution towards the old, the elderly, and the middle age, respectively. This suggests the existence of latent disease characteristics that affect the age profiles of different conditions.

The effects of shrinkage in the proposed MAP and BLUP estimators are demonstrated in the right panels of \autoref{fig:mimic.shrink.mov} for the four rare conditions displayed in the right panels of \autoref{fig:mimic.subpop}.
Subpopulations with smaller sample sizes were typically shrunk more towards the population mean using the MAP and BLUP methods that pool population information from the training samples, in which case the estimate differed more notably from the MLE within the approximating family.

To numerically assess the density estimates, we consider cross-entropy which also targets the information loss instead of the KL divergence, since the underlying densities are unknown.
Given any density estimator $\h p$, we approach the cross-entropy $\cxEtrp(p, \hat p) \defas - \E_p \log{\h p}$, which is the quantity in $\kldiv{p}{\h p}$ that depends on the estimate $\h p$, by an unbiased leave-one-out estimate
\begin{equation}\label{def:looce}
\cxEtrp_{\text{LOO}}(p, \h p) \defas -\frac{1}{N} \sum_{j \leq N} \log{\h p_{-j}(X_j)},
\end{equation}
where $\h p_{-j}$ is a density estimate constructed with all but the $j$th observations. 
A summary of cross-entropy applied on 844 testing samples using the proposed methods and KDE are reported in \autoref{tbl:mimic.looce}, where a smaller cross-entropy indicates a better estimate. Occasionally KDE resulted in non-finite cross entropy, leading to a smaller number of samples reported.

\autoref{tbl:mimic.looce} shows that the proposed methods produced density estimates that better reflect sample information compared to KDE by having smaller mean cross-entropy, and also smaller standard deviations. 
To quantify the sensitivity w.r.t. the training subpopulations, we partitioned the training samples into 5 folds, held out one fold at a time in the training step, fitted the densities for the testing samples, and reported the range of the five mean cross-entropies for a sensitivity measure. 
The proposed methods are shown to be insensitive to varying training subpopulations, which suggests that the selected training samples are sufficient for our methods to capture the major structure of the underlying family. 

\begin{figure}[!h]
	\includegraphics[width=1\textwidth]{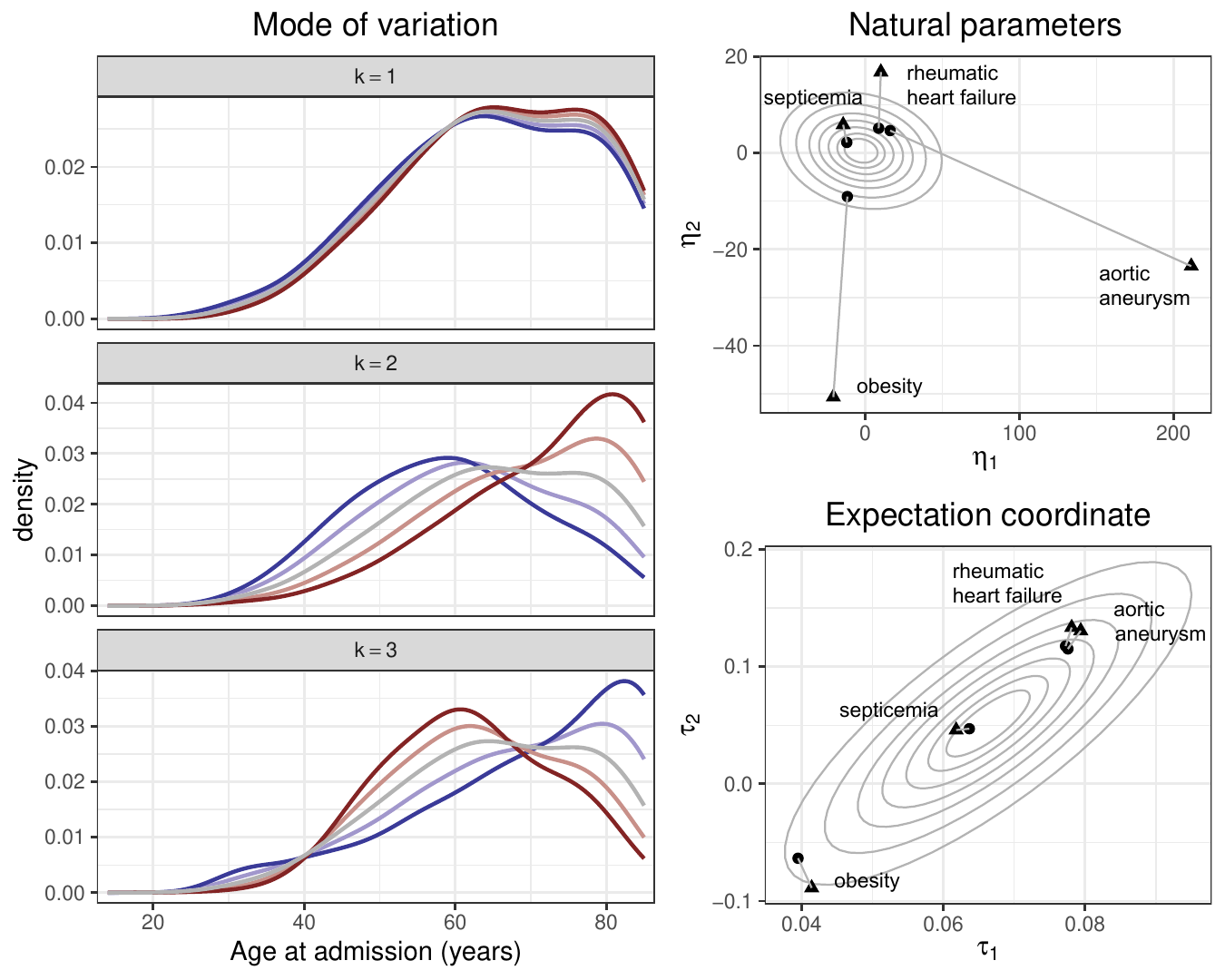}
	\spacingset{1}\caption{\textit{
		Left: mode of variation by varying each of the first three natural parameters of the estimated family $\h\calP_K$, where blue and red stand for, respectively, lower and higher values.
		Right: for the four sparsely observed subpopulations shown in the right panels of \autoref{fig:mimic.subpop}, the MLE parameter estimates (triangles) are shrunk in the natural parameter space (upper panel) and the expectation coordinate space (lower panel) w.r.t. the population distribution of training component scores (elliptical contours), producing the corresponding MAP (dots, upper right) and BLUP (dots, lower right) estimates. 
%		\tbd{colors in the MOV figure.}
	}}
	\label{fig:mimic.shrink.mov}
\end{figure}

\begin{table}[!h]
\begin{center}
	\caption{Cross-entropy \eqref{def:looce} of different estimators applied on the MIMIC data.}
	\label{tbl:mimic.looce}
	
\begin{tabular}{lrlrrr}
	\toprule
	size & $n^*$ & estimator & mean & sd & sensitivity\\
	\midrule
	&  & MLE & 4.60 & 1.62 & 0.062\\
	&  & MAP & 4.13 & 0.42 & 0.013\\
	& \multirow{-3}{*}{\raggedleft\arraybackslash 423} & BLUP & 4.11 & 0.36 & 0.039\\
	\cmidrule{2-6}
	\multirow{-4}{*}{\raggedright\arraybackslash (0,10]} & 402 & KDE & 5.56 & 2.42 & -\\
	\cmidrule{1-6}
	&  & MLE & 4.03 & 0.32 & 0.015\\
	&  & MAP & 3.97 & 0.25 & 0.005\\
	& \multirow{-3}{*}{\raggedleft\arraybackslash 343} & BLUP & 3.97 & 0.24 & 0.017\\
	\cmidrule{3-6}
	\cmidrule{2-2}
	\multirow{-4}{*}{\raggedright\arraybackslash (10,35]} & 341 & KDE & 4.16 & 0.37 & -\\
	\cmidrule{1-6}
	&  & MLE & 3.93 & 0.22 & 0.013\\
	&  & MAP & 3.92 & 0.21 & 0.004\\
	&  & BLUP & 3.92 & 0.21 & 0.008\\
	\cmidrule{3-6}
	\multirow{-4}{*}{\raggedright\arraybackslash (35,75]} & \multirow{-4}{*}{\raggedleft\arraybackslash 78} & KDE & 3.99 & 0.22 & -\\
	\bottomrule
\end{tabular}

\end{center}

\spacingset{1}\textit{
The results are stratified by size of testing samples, where the number of testing samples involved is listed as $n^*$. 
The sensitivity of proposed methods are the range of the mean cross-entropy across 5 experiments, each holding out one of five folds in the training samples.
}
\end{table}

%\clearpage

\subsection{Precipitation Return Level Estimation} \label{sec:haihe}

We then analyze the precipitation data in the Haihe region to illustrate the versatility of the proposed methods in providing good estimates for both the body part of the density function and the tail probability. 
We demonstrate that our proposed methods can borrow information from stations with longer records to construct approximate distribution families for annual maxima of hourly rainfall, which lead to better estimates of densities and return levels for those stations having shorter records.

Predicting the likelihood of extreme precipitation is of great importance because of the serious damage such events can cause to economy and social life. We are in particular interested in heavy rainfall during short time period. 
The likelihood of extreme precipitation event in a certain region can be quantified using the $T$-year return level, which is defined as the level expected to be exceeded once in $T$ years, or equivalently the $1 - 1/T$ quantile of the marginal distribution under stationarity and independence assumptions. % of annual hourly maximum precipitation \citep[see, e.g., ][]{reis:07}. 
Sample quantile is a straight forward estimate for small $T$, which is referred to as the empirical return level, but this approach requires more than $T$ data points being available. Otherwise, distribution models are necessary for estimating the return level for longer return periods. 

We focus on estimating the 5- to 30-year return levels at various locations in the Haihe River basin, a vital region in northern China that includes mega-cities such as Beijing and Tianjin. Precipitation data from 232 weather stations around Haihe River basin from 1961 to 2012 are available from the National Meteorological Information Center (NMIC) and the China Meteorological Administration (CMA) and are used for this analysis. 
All the 232 stations investigated have at least 30 years of records. 

Our target is the distribution of the annual maximum rainfall per hour, where each station defines a subpopulation.
As the distribution of precipitation is heavily right-skewed, the pre-smoothing density estimates could be zero within the working domain, creating an issue when we calculate the log-density. To address this issue and the positive constraint, log-precipitation is used as the working variable.
See \autoref{supp-appendix:log.inf} of the Supplementary Materials for detail. %\hyperref[appendix:log.inf]{Appendix}. \tbd{Maybe in supp?}

We compared return levels estimation and the information loss of the proposed methods with standard approaches. 
To imitate the situation of newer weather stations having shorter records, we randomly selected $n = 50$ stations as training sites and retained the full records in these stations for training, and another $n^* = 100$ testing sites where records from randomly chosen 10 consecutive years were used for fitting.
Estimates of density and 5- to 30-year return levels were obtained from non-parametric approaches including our proposed methods and standard KDE, as well as parametric approaches of maximum likelihood estimates within the families of Gamma and GEV distribution.
The performance metric used to evaluate density estimation was cross-entropy \eqref{def:looce},
which measures the overall fit. 
For return level estimates, we used the relative difference of the estimates to the empirical return levels that are obtained from full records in the testing stations,
which measures the fit in the right tail. 
The entire process was repeated $10$ times, and the overall average cross-entropies and the relative differences of the return levels averaged over the stations are reported in \autoref{fig:haihe.return.level}.

The proposed methods in the MLE, MAP, and BLUP variants are all significantly better than the KDE and GEV approaches under the information loss. The estimates using Gamma family of distributions were comparable to our results but underperformed the MAP and BLUP methods which borrow information from the population. For return level estimation, 
%as shown in \autoref{fig:haihe.return.level}, 
the proposed MAP and BLUP methods had the least variation compared to MLE, Gamma and GEV methods, where the last three methods do not pool information. For shorter (5- and 10- year) return levels, the bias of all methods were comparable, but for longer (20- and 30- year) return levels, the bias of our methods were significantly smaller than Gamma and GEV methods.  
In summary, our proposed methods have the best performance in estimating the return levels as well. 
Hence, the proposed methods provide better fits compared to classical approaches in both the bulk and the tail parts of the distributions and lead to better quantification of interesting characteristics of the distribution such as return levels from a small number of discrete observations.

These numerical results come as a positive surprise, as we only used 10 years of records at the testing sites and yet were able to obtain good estimates for 20- and even 30-years return levels.
The proposed approaches are able to estimate the tail structure from the data rather than relying on assumptions, and borrow information from the population enforced through parameter shrinkage.

\begin{figure}[!h]
	\centering
	\includegraphics[width=0.975\textwidth]{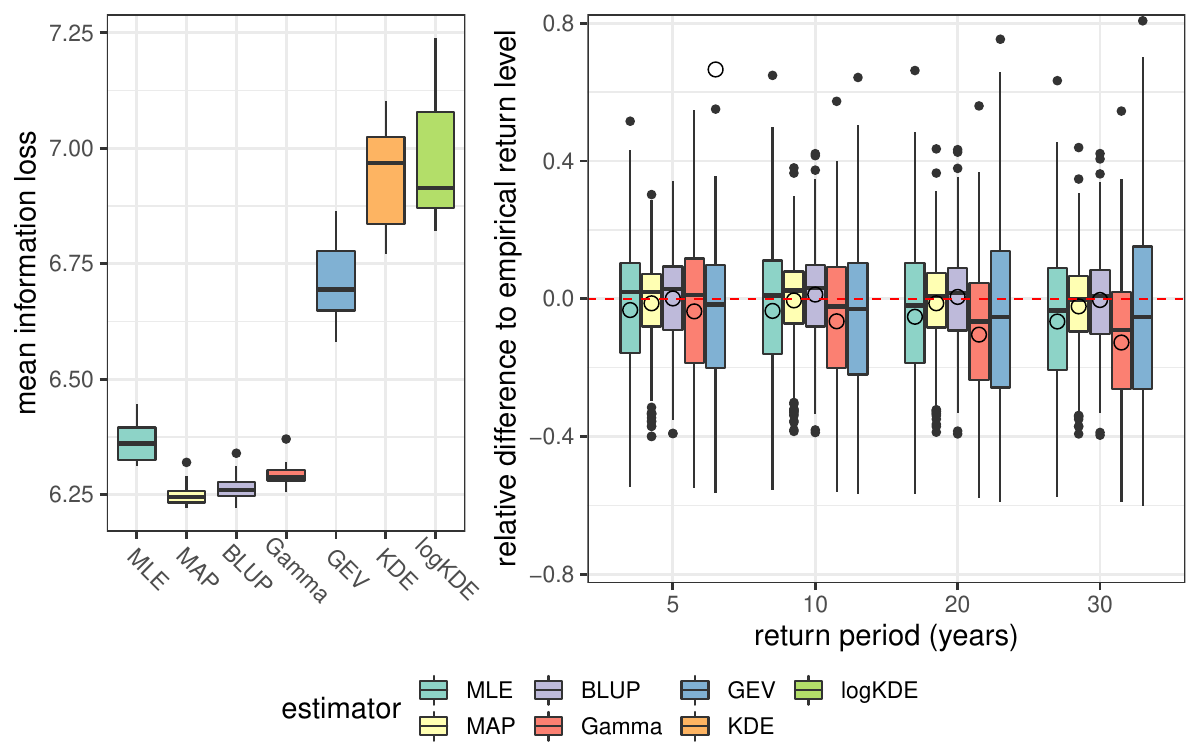}
	\spacingset{1}\caption{\textit{Left panel: Cross-entropies of different estimators applied on the Haihe rainfall data. Left panel: Cross-entropies computed by \eqref{def:looce}, where each method was fitted with decennial records over $10$ repeated experiments. MLE, MAP, and BLUP, the proposed methods; KDE, kernel density estimate; Gamma, density fit within a Gamma distribution; GEV, density fit within a GEV family.
%	using \texttt{fevd} from \texttt{extRemes} package \citep{gill:16}.
	Right panel: Relative differences of estimated return levels using decennial records compared to the empirical ones using full records ($\geq 30$ years) for the testing weather stations. Each solid dot represents the average relative difference at a station averaged over Monte Carlo experiments, while each  hollow circle represents the mean over all stations. Estimates closer to the horizontal zero-reference line are better. The GEV had some extreme over-estimate outliers, pulling its mean errors outside plotting range. }}
	\label{fig:haihe.return.level}
\end{figure}

%\clearpage

\section{Theoretical Results}\label{sec:theory}

We state the theoretical results for the proposed methods including the identifiability of the approximating families, consistency of MLE within those families, and consistency of the component scores. 
While the major motivation and justification for the proposed methods are the small sample performance as demonstrated in the simulations and real data applications, 
in this section we also establish large sample properties of the proposed estimators to justify their broad validity.
\autoref{ssec:consistencyLarge} and \autoref{ssec:consistencyComp} investigate the large testing sample scenario where the testing sample size $N$, the number of training samples $n$, and the training sample sizes $N_i$ all diverge, while \autoref{ssec:consistencySmall} considers the small testing sample case when $N$ is small and kept fixed but $n$ and $N_i$ diverge.
We present theoretical properties of the MLE as justification for the proposed methods, since for a large testing sample the BLUP and MAP estimates emphasize the observation but not the population-level information and will be close to ``degenerate'' at the MLE estimate. 
Proofs and additional discussions are included in the Supplementary Materials.

For theoretical development, densities in the underlying family are required to be smooth and uniformly bounded away from zero and infinity. This condition is standard when analyzing densities as functional data given discrete observations; see, e.g., \cite{pete:16,han:20}.

\begin{enumerate}[label=(A\arabic*)]
	\item \label{ass:boundedpdf} All densities $q \in \calP$ are continuously differentiable. Moreover, there is a constant $M > 1$ such that, for all $q \in \calP$, $\norminf{q}, \norminf{1/q}$ and $\norminf{q^\prime}$ are bounded above by $M$.
	% M > 1 so that [1/M, M] covers the range.
\end{enumerate}

Under \autoref{ass:boundedpdf}, the approximating families $\calP_K$, $\tl\calP_K$, and $\h\calP_K$ as defined in \eqref{def:calpk}, \eqref{def:calpk.comp}, and \eqref{def:calpk.presmooth} are identifiable, thanks to the fact that the entries of the sufficient statistic are orthogonal to each other and to the constant function, as shown in \autoref{supp-prop:identifiable} in the Supplemental Materials.

\subsection{Consistency of MLE as $N$ Diverges} \label{ssec:consistencyLarge}
The proposed MLE estimates within low-dimensional approximating family $\h\calP_K$ are shown to be consistent.
The MLE is computed with an additional i.i.d. sample $X_1, \dots, X_N$ of size $N$, of which the density $p$ is an independent copy of the random density. The approximating family $\h\calP_K$ is obtained from $n$ discretely observed training subpopulations each with $N_i \ge m$ observations. Theoretical results for this practical scenario are presented here, and results for MLE obtained within $\calP_K$ and $\tl\calP_K$ are included in the Supplementary Materials.
%in \nameref{supp-sec:kldiv} of Supplementary Materials. 
For brevity, we use $\h p = \h p_{N, K} \in \h\calP_K$ to denote the MLE density estimate with parameter $\h \theta = \h\theta_{N, K}$ defined in \eqref{def:fpca.mle}.

The consistency results rely on the following conditions imposed on the covariance of the log-transformed random density.

\begin{enumerate}[label=(C\arabic*)]
	\item \label{ass:consist.kl.pop} For process $f = \clog p \in L^2(\calT)$, the eigenvalues and eigenfunctions of the integral operator $\calG$ associated with the covariance function $G$ satisfy, as $J \toinf$,
	\begin{align*}
	\left( \sum_{k = 1}^J \| \vphi_k^\prime\|_\infty \right) \left(\sum_{k > J} \lambda_k\right) &= O(1), \\
	\left( \sum_{k = 1}^J \| \vphi_k^\prime\|_\infty \right)^{2/3} \left(\sum_{k>J}\lambda_k\right) &= o(1).
	\end{align*}
	
	\item \label{ass:cov.G.eigen.sep} For all $k = 1, \dots, \rnk \calG$, the non-zero eigenvalues are distinct and satisfy $\delta_k \defas \inf_{l\not=k} \abs{\lambda_k - \lambda_l} / 2 > 0$; the estimated eigenfunctions are properly aligned so that  $\inner{\vphi_k}{\tl\vphi_k} \geq 0$ and $\inner{\vphi_k}{\h\vphi_k} \geq 0$.
\end{enumerate}

Condition \autoref{ass:consist.kl.pop} ensures that the high frequency eigenfunctions are not overly bumpy relative to the amount of information they are carrying as quantified by the eigenvalues, so that the trajectories can be reasonably represented with the first few components. 
Condition \autoref{ass:cov.G.eigen.sep} is standard for simplifying presentation. We assume $\rnk \calG = \infty$ in discussion hereafter unless otherwise stated.

Conditions \autoref{ass:boundedpdf}, \autoref{ass:consist.kl.pop}, and \autoref{ass:cov.G.eigen.sep} in combine are sufficient for the consistency of MLE within $\calP_K$ and $\tl\calP_K$ as shown in \autoref{supp-thm:kldiv.all} of the Supplementary Materials. 
When discrete training samples and pre-smoothing are involved, in parallel to those in \cite{pete:16}, additional conditions \autoref{ass:presmooth.MISE}, \autoref{ass:presmooth.UnifTight}, and \autoref{ass:presmooth.superTight} are also required as listed in the \hyperref[appendix:presmooth]{Appendix}. Those conditions basically requires that the pilot density estimator performs well uniformly, provided the sample sizes are sufficiently large. Here  \autoref{ass:presmooth.MISE} is a uniform MISE requirement, \autoref{ass:presmooth.UnifTight} requires the $L^\infty$ error to be $O_p(a_m)$, and \autoref{ass:presmooth.superTight} is a requirement for the rate of sample sizes as the number of samples increases, so that one can derive uniform rates of convergence for the pre-smoothed training densities and log-densities under the $L^\infty$ norm.
The conditions \autoref{ass:presmooth.MISE}, \autoref{ass:presmooth.UnifTight}, and \autoref{ass:presmooth.superTight} can be satisfied by the weighted kernel density estimate (KDE) in \eqref{def:weighted.KDE} with proper bandwidths and kernels, e.g., $h \asymp m^{-1/3}$ and the Gaussian kernel. See \autoref{supp-sec:presmooth} of the Supplementary Materials for detail.

\subsubsection{Main Consistent Results}

We have the following result for the proposed MLE within $\h\calP_K$. Recall that $n$ is the number of training subpopulations, $m$ is the minimal training sample size in \autoref{ass:presmooth.superTight}, $N$ is the size of the fitting sample, and sequences $\cbrk{a_m}$, $\cbrk{b_m}$ are those in \autoref{ass:presmooth.MISE} and \autoref{ass:presmooth.UnifTight}.

\begin{thm}\label{thm:kldiv.sample}
	Under \autoref{ass:boundedpdf}, 
%	\autoref{ass:trans.bound}, 
	\autoref{ass:consist.kl.pop}, \autoref{ass:cov.G.eigen.sep}, 
	\autoref{ass:presmooth.MISE}, \autoref{ass:presmooth.UnifTight}, and \autoref{ass:presmooth.superTight},
%	\autoref{ass:trans.lip}, 
	if $n$, $N$, $K$ increase to infinity in such a way that 
	\begin{align*}
	A_K \sqrt{K/N} \tozero, \quad
	K A_K (1 / \sqrt{n} + b_m) \tozero, \quad
	\sum_{k\leq K} \lambda_k \inv (1 / \sqrt n + a_m) = O(1), \\
	A_K \sum_{k\leq K} \delta_k\inv (1/\sqrt{n} + b_m) \tozero, \qquad
	\sum_{k\leq K} (\delta_k \lambda_k) \inv (1/\sqrt{n} + b_m) = O(1),
	\end{align*}
	where
	\begin{equation}
	A_K = 2 \max \left(
	\size \calT ^{-1/2},
	(\sum_{k = 1}^K \norminf{\vphi_k'})^{1/3}
	\right),
	\end{equation}
	then 
	\begin{align*}
	\kldiv{p}{\h p_{N, K}} &=  
	O_p\left(
		\left(\frac{1}{n} + b_m^2\right) \left(
		K + 
		\sum_{k\leq K} \delta_k\inv
		\right)^2
		+ \sum_{k>K}\lambda_k + \frac{K}{N}
	\right).
	\end{align*}
\end{thm}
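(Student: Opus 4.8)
The plan is to split $\kldiv{p}{\h p_{N,K}}$ into an \emph{approximation} error --- how well $p$ can be reproduced inside the estimated family $\h\calP_K$ --- and an \emph{estimation} error coming from replacing $p$ by the empirical measure of the fitting sample. Write $f=\clog p = \mu + \sum_{k\ge1}\eta_k\vphi_k$, let $\h\eta=(\h\eta_1,\dots,\h\eta_K)$ with $\h\eta_k=\inner{f-\h\mu}{\h\vphi_k}$ be the scores of $f$ in the \emph{estimated} basis, and set $\h p^\circ=\h p_{\h\eta}\in\h\calP_K$; its normalizing constant is finite because $\h\mu$ and the $\h\vphi_k$ are bounded continuous functions on $\calT$. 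Since $\h\theta_{\text{MLE}}$ maximizes $\theta^T\h\vphi_{0,\ave}-\h B_K(\theta)$ and $\h B_K$ is convex with $\nabla\h B_K(\h\theta_{\text{MLE}})=\h\vphi_{0,\ave}$, the subgradient inequality of $\h B_K$ at $\h\theta_{\text{MLE}}$ together with a direct expansion of $\kldiv{p}{\h p^\circ}-\kldiv{p}{\h p_{N,K}}$ gives
\begin{equation*}
\kldiv{p}{\h p_{N,K}} \;\le\; \kldiv{p}{\h p^\circ} \;+\; \bigl(\h\theta_{\text{MLE}}-\h\eta\bigr)^T\Bigl(\h\vphi_{0,\ave}-\textstyle\int_\calT \h\vphi\, p\Bigr),
\end{equation*}
so it suffices to bound the two terms on the right separately.

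For the approximation term I would first record uniform perturbation rates for the presmoothed FPCA objects. Under \autoref{ass:boundedpdf} the centered log-transform is bi-Lipschitz on the relevant range, so \autoref{ass:presmooth.MISE} gives $\normp[2]{\h\mu-\mu}=O_p(n^{-1/2}+b_m)$ and $\normhs{\h G-G}=O_p(n^{-1/2}+b_m)$; Weyl's inequality and the Davis--Kahan $\sin\Theta$ theorem then yield $\abs{\h\lambda_k-\lambda_k}=O_p(n^{-1/2}+b_m)$ and $\normp[2]{\h\vphi_k-\vphi_k}=O_p(\delta_k\inv(n^{-1/2}+b_m))$, with matching sup-norm and derivative versions obtained from \autoref{ass:consist.kl.pop}, the $L^\infty$ presmoothing rate $a_m$ of \autoref{ass:presmooth.UnifTight}--\autoref{ass:presmooth.superTight}, and \autoref{prop:preS.ass.good}. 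Because every density appearing below is bounded in $[1/M,M]$ up to $o(1)$ perturbations, a second-order expansion of the log-normalizer gives $\kldiv{p}{\h p^\circ}\lesssim\normp[2]{\clog\h p^\circ-\clog p}^2$; expanding $\clog\h p^\circ-\clog p=(\h\mu-\mu)+\sum_{k\le K}(\h\eta_k\h\vphi_k-\eta_k\vphi_k)-\sum_{k>K}\eta_k\vphi_k$ and using $\abs{\h\eta_k-\eta_k}\le\normp[2]{f-\mu}\normp[2]{\h\vphi_k-\vphi_k}+\normp[2]{\h\mu-\mu}$, $\E\abs{\eta_k}\le\sqrt{\lambda_k}$, $\E\sum_{k>K}\eta_k^2=\sum_{k>K}\lambda_k$, and orthonormality of the $\h\vphi_k$ produces $\kldiv{p}{\h p^\circ}=O_p\bigl((n\inv+b_m^2)(K+\sum_{k\le K}\delta_k\inv)^2+\sum_{k>K}\lambda_k\bigr)$, the first half of the claimed bound.

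For the estimation term, Cauchy--Schwarz reduces it to $\norm{\h\theta_{\text{MLE}}-\h\eta}\cdot\norm{\h\vphi_{0,\ave}-\int_\calT\h\vphi\,p}$. The second factor is the centered mean of the orthonormal random vector $\h\vphi(X_j)$ under a density bounded by $M$, so its squared expectation is at most $MK/N$ and it is $O_p(\sqrt{K/N})$. For the first factor, $\h\theta_{\text{MLE}}$ and $\h\eta$ are the natural parameters in $\h\calP_K$ with moment parameters $\h\vphi_{0,\ave}$ and $\int_\calT\h\vphi\,\h p^\circ$, so $\nabla\h B_K(\h\theta_{\text{MLE}})-\nabla\h B_K(\h\eta)=\bar I\,(\h\theta_{\text{MLE}}-\h\eta)$ with $\bar I$ the Fisher information averaged over the segment; the key point is that for any unit vector $v$, writing $g=\sum_k v_k\h\vphi_k$,
\begin{equation*}
v^T\nabla^2\h B_K(\theta)\,v = \tfrac12\iint_{\calT\times\calT}\bigl(g(s)-g(t)\bigr)^2\h p_\theta(s)\h p_\theta(t)\,ds\,dt \;\ge\; \bigl(\inf_{t\in\calT}\h p_\theta(t)\bigr)^2\size\calT,
\end{equation*}
using $\int_\calT\h\vphi_k=0$ (the empirical covariance operator annihilates constants) and $\norm v=1$; hence $\bar I$ is bounded below by a positive constant once the densities along the segment stay away from zero, so $\norm{\h\theta_{\text{MLE}}-\h\eta}\lesssim\norm{\h\vphi_{0,\ave}-\int_\calT\h\vphi\,p}+\norm{\int_\calT\h\vphi\,(p-\h p^\circ)}=O_p(\sqrt{K/N})+O_p(\normp[2]{p-\h p^\circ})$ by Bessel's inequality. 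Since $\normp[2]{p-\h p^\circ}\lesssim\normp[2]{\clog\h p^\circ-\clog p}$, the product of the two factors is $O_p(K/N)$ plus cross terms which, by $2ab\le a^2+b^2$, fold into $K/N$ and into the approximation bound already obtained; adding the two pieces yields the stated rate.

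The hard part is making the uniformity rigorous: establishing $L^\infty$ (not merely $L^2$) convergence of $\h\mu$ and the $\h\vphi_k$, and through it showing that $\h p^\circ$ and $\h p_{N,K}$ remain bounded away from $0$ and $\infty$ with probability tending to one --- this is what legitimizes both the Taylor bound for $\kldiv{p}{\h p^\circ}$ and the Fisher lower bound above. This is exactly where the rate conditions do their work: $KA_K(n^{-1/2}+b_m)\tozero$ bounds $\norminf{\clog\h p^\circ-\clog p}$ via the Sobolev-type estimate built into the definition of $A_K$ and \autoref{ass:consist.kl.pop}; $\sum_{k\le K}\lambda_k\inv(n^{-1/2}+a_m)=O(1)$ and $\sum_{k\le K}(\delta_k\lambda_k)\inv(n^{-1/2}+b_m)=O(1)$ keep $\h\lambda_k\asymp\lambda_k$ and control the estimated scores and normalizers; $A_K\sum_{k\le K}\delta_k\inv(n^{-1/2}+b_m)\tozero$ controls the eigenfunction perturbation uniformly; and $A_K\sqrt{K/N}\tozero$ forces the MLE into the good region so the mean-value argument applies. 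A short continuity/bootstrapping argument ties these estimates together, with the population-level statement \autoref{supp-thm:kldiv.all} serving as the template that the presmoothing analysis perturbs.
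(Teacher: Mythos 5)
Your proposal is correct and follows essentially the same route as the paper's proof: an oracle decomposition within $\h\calP_K$ into a training/approximation term (controlled by FPCA perturbation bounds for $\h\mu$, $\h\vphi_k$ and the tail $\sum_{k>K}\lambda_k$) plus an estimation term handled via strong convexity of the log-partition function, with $A_K$ playing exactly the role you assign it — the $L^2$-to-$L^\infty$ Lipschitz constant on the eigenspaces (cf.\ the supplement's eigenspace Lipschitz proposition) that keeps the relevant densities uniformly bounded so the quadratic KL bound and the Fisher-information lower bound apply. The uniformity/localization issues you flag as ``the hard part'' are indeed where the bulk of the supplementary argument lives, but your account of how each rate condition resolves them is accurate.
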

\begin{proof}
	See \autoref{supp-thm:kldiv.all} in the Supplementary Materials.
\end{proof}
Note that the result is highly flexible and suits any general infinite-dimensional families. In particular, we do not assume the underlying $\calP$ to be a finite-dimensional exponential family.
The three summands in the rate correspond to the training error, a bias term due to the approximating family, and a variance term from fitting with $N$ observations. Varying the number of components $K$ trades off errors from these three terms, as a more complex model with a larger $K$ will have larger training and fitting errors but smaller approximation bias.
Quantity $A_K$ is closely related to \autoref{ass:consist.kl.pop} and the equivalence of $L^\infty$ and $L^2$ norm in the eigenspaces of covariance operator; see \autoref{supp-prop:eigenspace.lipschitz} in the Supplementary Materials.

An immediate corollary is the convergence of the densities in the $L^1$ norm by Pinsker's inequality \citep{wain:19}, which states that the KL divergence dominates squares of $L^1$ norm of differences.
\begin{corollary}
	Under the conditions of \autoref{thm:kldiv.sample}, 
\begin{equation*}
\normp[1]{\h p_{N, K} - p}^2 =  
	O_p\left(
	\left(\frac{1}{n} + b_m^2\right) \left(
	K + 
	\sum_{k\leq K} \delta_k\inv
	\right)^2
	+ \sum_{k>K}\lambda_k + \frac{K}{N}
	\right).
\end{equation*}
\end{corollary}

Our general result directly implies the next corollary where the underlying family is actually a $K_0$-dimensional exponential family for some finite $K_0$ with no approximation bias resulting from truncating the log-densities. Note that \autoref{ass:consist.kl.pop} is not required, and by convention we set $\lambda_k$, $\vphi_k$ to be zero for $k > K_0$.

\begin{corollary}\label{coro:kldiv.finiteK}
	Under \autoref{ass:boundedpdf}, 
	\autoref{ass:cov.G.eigen.sep},
	\autoref{ass:presmooth.MISE}, \autoref{ass:presmooth.UnifTight}, and \autoref{ass:presmooth.superTight}, if $\calP$ is a $K_0$-dimensional exponential family for some finite $K_0$, then for any fixed $K \geq K_0$,
	\begin{align*}
	\kldiv{p}{\h p_{N, K}} 
	&= O_p\left(
		\frac{1}{n} + b_m^2	+ \frac{1}{N}
	\right),
	\end{align*}
	as $n, N \toinf$ and $m = m(n) \toinf$.
\end{corollary}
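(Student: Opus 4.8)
The plan is to obtain \autoref{coro:kldiv.finiteK} as a direct specialization of \autoref{thm:kldiv.sample}. First I would observe that when $\calP$ is a genuine $K_0$-dimensional exponential family, the log-densities $\clog p$ lie in the $K_0$-dimensional subspace $\spn\{\vphi_1,\dots,\vphi_{K_0}\}$, so the covariance operator $\calG$ has rank exactly $K_0$ (or at most $K_0$), meaning $\lambda_k = 0$ and $\vphi_k \equiv 0$ for $k > K_0$ by the stated convention. Consequently, for any fixed $K \ge K_0$, the truncation bias term $\sum_{k>K}\lambda_k$ in the rate of \autoref{thm:kldiv.sample} vanishes identically; this is exactly why \autoref{ass:consist.kl.pop} is not needed, since that assumption was only used to control the tail of the eigenvalue sequence against the bumpiness of high-frequency eigenfunctions, and here there are no such tail terms.

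Next I would verify that, with $K = K_0$ fixed (a constant), the five rate conditions in the hypothesis of \autoref{thm:kldiv.sample} all reduce to mild requirements that hold automatically as $n, N \toinf$ and $m = m(n) \toinf$. Since $K$ is fixed, the quantity $A_K = 2\max(\size\calT^{-1/2}, (\sum_{k=1}^{K_0}\norminf{\vphi_k'})^{1/3})$ is a fixed finite constant (finite because each $\vphi_k$ is smooth on the compact $\calT$, which follows from \autoref{ass:boundedpdf} ensuring the log-densities are $C^1$). Likewise $\sum_{k\le K_0}\lambda_k\inv$, $\sum_{k\le K_0}\delta_k\inv$, and $\sum_{k\le K_0}(\delta_k\lambda_k)\inv$ are all finite constants, using \autoref{ass:cov.G.eigen.sep} to guarantee the nonzero eigenvalues are distinct and bounded away from each other so the spectral gaps $\delta_k$ are strictly positive. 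Then: $A_K\sqrt{K/N} = O(N^{-1/2}) \tozero$; $K A_K(1/\sqrt n + b_m) = O(1/\sqrt n + b_m) \tozero$ since $b_m = o(1)$; $\sum_{k\le K_0}\lambda_k\inv(1/\sqrt n + a_m) = O(1)$ since it tends to $0$; and the last two conditions hold for the same reason. So all hypotheses of \autoref{thm:kldiv.sample} are met.

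Finally I would plug $K = K_0$ into the conclusion of \autoref{thm:kldiv.sample}. The rate becomes
\begin{equation*}
O_p\!\left(\left(\tfrac1n + b_m^2\right)\left(K_0 + \textstyle\sum_{k\le K_0}\delta_k\inv\right)^2 + \textstyle\sum_{k>K_0}\lambda_k + \tfrac{K_0}{N}\right),
\end{equation*}
and since $K_0 + \sum_{k\le K_0}\delta_k\inv$ is a fixed constant, the second factor is absorbed into the $O_p$; the middle sum is zero; and $K_0/N = O(1/N)$. This yields exactly $O_p(1/n + b_m^2 + 1/N)$ as claimed. For a general fixed $K \ge K_0$ rather than $K = K_0$ specifically, the argument is identical — the extra components $K_0+1,\dots,K$ contribute zero eigenvalues and zero eigenfunctions, so they add nothing to $A_K$ or to any of the finite sums (with the convention $\delta_k\inv$ interpreted appropriately for the repeated zero eigenvalues, or one simply notes the MLE within $\h\calP_K$ coincides in the relevant sense with that within $\h\calP_{K_0}$ up to negligible error), and the same bound results.

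The main obstacle, and the only place requiring genuine care, is the handling of the spectral-gap quantities $\delta_k$ for the zero eigenvalues $\lambda_k = 0$, $k > K_0$: strictly, \autoref{ass:cov.G.eigen.sep} is phrased for $k = 1,\dots,\rnk\calG = K_0$, so for a fixed $K > K_0$ one must interpret the convention "$\lambda_k, \vphi_k$ zero for $k > K_0$" as implying those terms drop out of the rate entirely rather than introducing problematic $1/\delta_k$ factors. The cleanest route is to note that the fitted MLE $\h p_{N,K_0}$ and $\h p_{N,K}$ differ only through estimated components with population counterparts equal to zero, whose magnitudes are controlled by the same concentration bounds already established in the proof of \autoref{thm:kldiv.sample}; everything else is bookkeeping. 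Since \autoref{thm:kldiv.sample} is already proved in the supplementary materials, I would simply remark that this corollary follows by setting $K = K_0$ and using that the approximation bias vanishes, which is what the short proof in the paper should say.
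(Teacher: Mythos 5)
Your proposal is correct and follows essentially the same route the paper intends: the corollary is obtained by specializing \autoref{thm:kldiv.sample} to a fixed $K\ge K_0$, where the truncation bias $\sum_{k>K}\lambda_k$ vanishes because $\rnk\calG\le K_0$, all the $K$-dependent quantities ($A_K$, $\sum_{k\le K}\lambda_k^{-1}$, $\sum_{k\le K}\delta_k^{-1}$, $\sum_{k\le K}(\delta_k\lambda_k)^{-1}$) are fixed finite constants so the rate conditions hold trivially, and the bound collapses to $O_p(1/n+b_m^2+1/N)$. Your closing caveat about the spectral gaps $\delta_k$ for $k>K_0$ identifies the one point needing care, and your resolution --- that the convention $\lambda_k=\vphi_k=0$ for $k>K_0$ is meant to make those components drop out of the rate rather than contribute divergent $\delta_k^{-1}$ factors --- is the intended reading.
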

\begin{proof}
	See \autoref{supp-coro:kldiv.finiteK} in the Supplementary Materials.
\end{proof}

\subsubsection{Selecting $K$}

We provide an example when the requirements in \autoref{thm:kldiv.sample} are satisfied in the following proposition. 

\begin{ps}\label{prop:optimal.K}
	Under \autoref{ass:cov.G.eigen.sep}, if 
	\begin{enumerate}[label = (\arabic*)]
		\item the eigenfunctions coincide with the polynomial or trigonometric basis so that $\norminf{\vphi_k'} \asymp k$ while $k$ large;
		
		\item eigenvalues have polynomial decay: $\lambda_k \asymp k^{-r}$ with $r \geq 3$ as $k \toinf$;
		
		\item minimum sizes of the $n$ training sample satisfies $m(n) = n^{r_m}$ for some $r_m > 0$;
		
		\item weighted KDE \eqref{def:weighted.KDE} is used as pre-smoother with $h \asymp m^{-1/3} = n^{-r_m / 3}$ as  $n\toinf$; and
		
		\item for any fixed $0 < r_a < 1/6$, the number of components $K$ used for fitting in dependence on $r, n, N, r_m$ and $r_a$ satisfies
		$$
		K = \h K^* \asymp \min\left(
		n^{\frac{1}{4(r+1)}}, n^{\frac{r_a r_m}{r+1}}, N^{1/r}
		\right) \text{ as } n, N \toinf,
		$$
	\end{enumerate}
	then the conditions for \autoref{thm:kldiv.sample} are satisfied with $a_m = n^{-r_a r_m}$, $b_m = n^{-r_m/3}$.
	As $n, N \toinf$,
	$$
	\kldiv{p}{\h p_{N, \h K^*}} = O_p\left(n^{\gamma_1} + N^{-1 + 1/r}\right),
	$$ 
	where $\gamma_1 = r_a r_m (1-r) / (r + 1)$ if $r_m \leq 1/(4 r_a)$, 
	and $\gamma_1 = -1/4 + 1 / (2 r + 2)$ if $r_m > 1/(4 r_a)$.
\end{ps}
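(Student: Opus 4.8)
The plan is to check each hypothesis of \autoref{thm:kldiv.sample} in turn under the concrete regime (1)--(5), then substitute the resulting orders into the conclusion of \autoref{thm:kldiv.sample} and simplify. Two preliminaries come first. From (1), $\sum_{k\le K}\norminf{\vphi_k'}\asymp K^2$, hence $A_K\asymp K^{2/3}$; from the polynomial decay in (2) the spectral gaps obey $\delta_k\asymp k^{-(r+1)}$, so $\sum_{k\le K}\delta_k\inv\asymp K^{r+2}$, $\sum_{k\le K}\lambda_k\inv\asymp K^{r+1}$, $\sum_{k\le K}(\delta_k\lambda_k)\inv\asymp K^{2r+2}$, and $\sum_{k>K}\lambda_k\asymp K^{1-r}$ (using $r>1$). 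The same inputs verify \autoref{ass:consist.kl.pop}: $(\sum_{k\le J}\norminf{\vphi_k'})(\sum_{k>J}\lambda_k)\asymp J^{3-r}=O(1)$ and $(\sum_{k\le J}\norminf{\vphi_k'})^{2/3}(\sum_{k>J}\lambda_k)\asymp J^{7/3-r}=o(1)$, both because $r\ge3$. Next I would invoke \autoref{prop:preS.ass.good} with the bandwidth of (4): taking $h\asymp m^{-1/3}$ gives $b_m^2\asymp m^{-2/3}$, i.e.\ $b_m=n^{-r_m/3}$, and reading (4) as the exponent $r_h=1/3$ in the variable $m$, the admissibility constraints $0<r_a<r_h$ and $r_h+r_a<1/2$ collapse to exactly $r_a<1/6$; hence \autoref{ass:presmooth.MISE}, \autoref{ass:presmooth.UnifTight}, and \autoref{ass:presmooth.superTight} hold with $a_m=n^{-r_a r_m}$ and $b_m=n^{-r_m/3}$ as asserted, while \autoref{ass:cov.G.eigen.sep} is assumed and \autoref{ass:boundedpdf} is standing.

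With these orders, each of the five displayed requirements of \autoref{thm:kldiv.sample}, after inserting $K=\h K^*$, becomes a single power-of-$n$-or-$N$ inequality that I would check against one of the three defining bounds on $\h K^*$. The bound $\h K^*\le N^{1/r}$ disposes of the requirement involving $N$: $A_K\sqrt{K/N}\asymp K^{7/6}N^{-1/2}\lesssim N^{7/(6r)-1/2}\to0$ since $r\ge3$. The bound $\h K^*\le n^{1/(4(r+1))}$ disposes of the terms carrying $1/\sqrt n$, e.g.\ $\sum_{k\le K}\lambda_k\inv/\sqrt n\asymp K^{r+1}n^{-1/2}\lesssim n^{-1/4}$, $\sum_{k\le K}(\delta_k\lambda_k)\inv/\sqrt n\asymp K^{2r+2}n^{-1/2}=O(1)$, $KA_K/\sqrt n\asymp K^{5/3}n^{-1/2}\to0$, and $A_K\sum_{k\le K}\delta_k\inv/\sqrt n\asymp K^{r+8/3}n^{-1/2}\to0$ (the last because $r+8/3<2(r+1)$). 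The bound $\h K^*\le n^{r_a r_m/(r+1)}$, together with $r_a<1/6$, disposes of the terms carrying $a_m$ or $b_m$: $\sum_{k\le K}\lambda_k\inv a_m\asymp K^{r+1}n^{-r_a r_m}=O(1)$, $\sum_{k\le K}(\delta_k\lambda_k)\inv b_m\asymp K^{2r+2}n^{-r_m/3}\lesssim n^{r_m(2r_a-1/3)}=O(1)$, while $KA_K b_m$ and $A_K\sum_{k\le K}\delta_k\inv b_m$ are $o(1)$ once the scalar inequalities $5r_a<r+1$ and $3r_a(r+8/3)<r+1$ are noted, both holding with room to spare for $r_a<1/6$, $r\ge3$. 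This exhausts all five conditions, so \autoref{thm:kldiv.sample} applies and delivers its rate.

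It then remains to substitute into that rate. With $K=\h K^*$ the bracketed factor is $(K+\sum_{k\le K}\delta_k\inv)^2\asymp K^{2r+4}$, so the bound reads $O_p\bigl((n\inv+n^{-2r_m/3})K^{2r+4}+K^{1-r}+K/N\bigr)$. Here $K/N\le N^{1/r}/N=N^{-1+1/r}$ always. For $K^{1-r}$: if $N^{1/r}$ is the binding term in $\h K^*$ then $K^{1-r}=N^{-1+1/r}$; otherwise the binding term is $n^{1/(4(r+1))}$ when $r_a r_m>1/4$, giving $K^{1-r}=n^{(1-r)/(4(r+1))}=n^{-1/4+1/(2r+2)}$, or $n^{r_a r_m/(r+1)}$ when $r_a r_m\le1/4$, giving $K^{1-r}=n^{r_a r_m(1-r)/(r+1)}$ — these are exactly the two formulas for $\gamma_1$, the split at $r_a r_m=1/4$ being $r_m=1/(4r_a)$. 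Finally, $(n\inv+n^{-2r_m/3})K^{2r+4}$ is absorbed into $n^{\gamma_1}+N^{-1+1/r}$: since $K^{2r+4}\le\min\{n^{1/(4(r+1))},n^{r_a r_m/(r+1)}\}^{2r+4}$ whichever constraint is active, the check reduces — via $(2r+4)+(r-1)=3(r+1)$ — to elementary scalar inequalities such as $3r_a r_m\le3/4<1$ and $r_a<2/9$, both implied by $r_a<1/6$ and $r_a r_m\le1/4$ (the complementary regime $r_a r_m>1/4$ forces $r_m>3/2$ and is easier still). Collecting the surviving terms gives $\kldiv{p}{\h p_{N,\h K^*}}=O_p(n^{\gamma_1}+N^{-1+1/r})$.

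The anticipated obstacle is bookkeeping rather than any single estimate: one must produce a single exponent $\h K^*$ that simultaneously satisfies all five two-sided constraints of \autoref{thm:kldiv.sample} — which is exactly why $\h K^*$ is forced to be the minimum of the competing powers $n^{1/(4(r+1))}$, $n^{r_a r_m/(r+1)}$, and $N^{1/r}$ — and then track which error term dominates across the regimes $r_a r_m\lessgtr1/4$ and $N^{1/r}$ versus the binding $n$-power, while making sure the ceiling $r_a<1/6$ and the threshold $r_m=1/(4r_a)$ line up with the admissibility in \autoref{prop:preS.ass.good} and with the two pieces of $\gamma_1$.
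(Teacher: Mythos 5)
Your proposal is correct and follows what is essentially the paper's own route: verify each of the five displayed conditions of \autoref{thm:kldiv.sample} by direct power counting using $A_K\asymp K^{2/3}$, $\delta_k^{-1}\asymp k^{r+1}$, and the three competing bounds defining $\h K^*$, then substitute into the theorem's rate and identify the dominating term in each regime (the identity $(2r+4)+(r-1)=3(r+1)$ and the thresholds $r_a<1/6$, $r_ar_m\lessgtr 1/4$ fall out exactly as in the paper). The only point worth making explicit is that reading $\lambda_k\asymp k^{-r}$ as implying the spectral gaps $\delta_k\asymp k^{-(r+1)}$ is an (intended but tacit) strengthening of the decay condition; with that noted, the argument is complete.
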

\begin{proof}
	See \autoref{supp-prop:optimal.K.all} in the Supplementary Materials.
\end{proof}

Such $\h K^*$ is derived by finding the optimal and feasible $K$ for the dominating terms in the convergence rate.
Compared to the result for optimal $K$ without presmoothing as in \autoref{supp-prop:optimal.K.all} of the Supplementary Materials, it is suggested that in this example, pre-smoothing does not slow down the convergence rate for the proposed MLE if the training subpopulations are densely observed so that $r_m$ is large, e.g. $r_m > 3/2$. Since if $r_m > 1/(4 r_a)$, then $\kldiv{p}{\tl p_{N, \tl K^*}} \asymp \kldiv{p}{\h p_{N, \h K^*}}$ with an optimal $\tl K^*$ for MLE within $\tl\calP_K$.
%at least with the choice of those sub-optimal $\tl K^*$ and $\h K^*$. 

This result is particularly interesting since it suggests that, for better convergence rate, when pre-smoothing with KDE, slight over-smoothing may be preferred. Indeed, noting $\gamma_1 < 0$ will be smaller if $r_a r_m > 1/4$, it is desirable to have larger $r_a$ so that smaller $r_m$ can satisfy $r_a r_m > 1/4$, which means smaller training samples can be considered ``dense''.
As required in \autoref{supp-prop:preS.ass.good}, the only restrictions for $r_a$ are $0 < r_h + r_a < 1/2$ and $0 < r_a < r_h$, and hence if we choose a smaller $r_h > 1/4$, a larger $r_a$ would be possible, making $r_a r_m > 1/4$ easier to satisfy.

\subsection{Consistency of Component Scores} \label{ssec:consistencyComp}

Another interesting question is the convergence of the component scores of pre-smoothed training trajectories $\ck f_i = \clog \ck p_i$, $i = 1, \ldots, n$, which are used to capture distribution of random densities in MAP \eqref{def:fpca.map} and BLUP \eqref{def:fpca.blup}.
Notation-wise, let $\ck f_{i,K} = \h\mu + \sum_{k \leq K} \ck\eta_{ik} \h\vphi_k$ be the truncated pre-smoothed log-density
%the projection of $\ck f_i$ onto $\h\calF_K$ 
with $\ck\eta_{ik} = \inner{\ck f_i - \h\mu}{\h\vphi_k}$ for $k = 1, \ldots, K$, $i = 1, \dots, n$. 
The next proposition shows that the estimated component scores $\ck \eta_{ik}$ converge uniformly to the truth $\eta_{ik} = \inner{f_i - \mu}{\vphi_k}$.

\begin{ps}\label{prop:comp.score}
	Under \autoref{ass:boundedpdf}, \autoref{ass:presmooth.MISE}, \autoref{ass:presmooth.UnifTight}, \autoref{ass:presmooth.superTight},
	and \autoref{ass:cov.G.eigen.sep},
	for any $K < \infty$, as $n\toinf$, we have
	\begin{equation*}
	\sup_{i \leq n} \sup_{k \leq K} \abs{\ck\eta_{ik} - \eta_{ik}} = O_p\left(a_m + (n\invsqr + b_m) \sup_{k \leq K} \delta_k\inv\right).
	\end{equation*} 
\end{ps}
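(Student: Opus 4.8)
The plan is to split the error $\ck\eta_{ik}-\eta_{ik}$ into a part governed by the pre-smoothing error of the $i$th trajectory together with the error in the estimated mean, and a part governed by the perturbation of the eigenfunctions, and to control each uniformly over $i\le n$ and $k\le K$. Set $r_i=(\ck f_i-f_i)-(\h\mu-\mu)$, so that $\ck f_i-\h\mu=(f_i-\mu)+r_i$. Using $\norm{\h\vphi_k}_{L^2}=\norm{\vphi_k}_{L^2}=1$,
\[
\ck\eta_{ik}-\eta_{ik}=\inner{\ck f_i-\h\mu}{\h\vphi_k}-\inner{f_i-\mu}{\vphi_k}=\inner{r_i}{\h\vphi_k}+\inner{f_i-\mu}{\h\vphi_k-\vphi_k},
\]
and by Cauchy--Schwarz the two summands are bounded by $\norm{r_i}_{L^2}$ and $\norm{f_i-\mu}_{L^2}\,\norm{\h\vphi_k-\vphi_k}_{L^2}$ respectively. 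Since \autoref{ass:boundedpdf} forces $\log q$, and hence $f_i-\mu=\clog p_i-\mu$, to be uniformly bounded in $\norminf{\cdot}$ over $\calP$, the quantity $\sup_{i}\norm{f_i-\mu}_{L^2}$ is a deterministic constant, and the problem reduces to bounding $\sup_{i\le n}\norm{r_i}_{L^2}$ and $\sup_{k\le K}\norm{\h\vphi_k-\vphi_k}_{L^2}$.

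For the pre-smoothing error I would argue on the event $\{\sup_{i\le n}\norminf{\ck p_i-p_i}\le Ra_m\}$, which has probability tending to $1$ by a union bound over the $n$ training samples together with \autoref{ass:presmooth.UnifTight} and \autoref{ass:presmooth.superTight}. On this event, for $m$ large the $\ck p_i$ are bounded away from zero (by \autoref{ass:boundedpdf}), so $\log$ is Lipschitz on the relevant range, and $\sup_i\norminf{\ck f_i-f_i}\lesssim\sup_i\norminf{\ck p_i-p_i}=O_p(a_m)$, hence $\sup_i\norm{\ck f_i-f_i}_{L^2}=O_p(a_m)$. For the mean, write $\h\mu-\mu=n\inv\sum_i(\ck f_i-f_i)+(n\inv\sum_i f_i-\mu)$; the second term is $O_p(n\invsqr)$ by the $L^2$ central limit theorem (boundedness of $f$ supplies all needed moments), and for the first, Jensen's inequality gives $\norm{n\inv\sum_i(\ck f_i-f_i)}_{L^2}^2\le n\inv\sum_i\norm{\ck f_i-f_i}_{L^2}^2$, whose expectation is $O(b_m^2)$ upon conditioning on each $p_i$ and invoking the uniform MISE bound \autoref{ass:presmooth.MISE} with $N_i\ge m$. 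Therefore $\norm{\h\mu-\mu}_{L^2}=O_p(n\invsqr+b_m)$, so $\sup_{i\le n}\norm{r_i}_{L^2}=O_p(a_m+n\invsqr+b_m)$ and the first summand of $\ck\eta_{ik}-\eta_{ik}$ is $O_p(a_m+n\invsqr+b_m)$ uniformly.

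The eigenfunction step is the crux, and the point is to obtain $\|\h\calG-\calG\|_{\mathrm{op}}=O_p(n\invsqr+b_m)$ at the \emph{averaged} rate $b_m$ rather than the cruder uniform rate $a_m$. I would split $\h G-G=(\h G-G_n^{*})+(G_n^{*}-G)$ with the oracle sample covariance $G_n^{*}=n\inv\sum_i(f_i-\mu)^{\otimes2}$: the term $G_n^{*}-G$ is $O_p(n\invsqr)$ in Hilbert--Schmidt norm by the usual sample-covariance central limit theorem (using $\E\norm{f-\mu}_{L^2}^4<\infty$), while the identity $(\ck f_i-\h\mu)^{\otimes2}-(f_i-\mu)^{\otimes2}=(f_i-\mu)\otimes r_i+r_i\otimes(f_i-\mu)+r_i^{\otimes2}$ together with $\sup_i\norm{f_i-\mu}_{L^2}=O(1)$ gives $\normhs{\h G-G_n^{*}}\lesssim\big(n\inv\sum_i\norm{r_i}_{L^2}^2\big)^{1/2}+n\inv\sum_i\norm{r_i}_{L^2}^2$, and $n\inv\sum_i\norm{r_i}_{L^2}^2\lesssim n\inv\sum_i\norm{\ck f_i-f_i}_{L^2}^2+\norm{\h\mu-\mu}_{L^2}^2=O_p(n\inv+b_m^2)$ by the same MISE argument. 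Hence $\|\h\calG-\calG\|_{\mathrm{op}}\le\normhs{\h G-G}=O_p(n\invsqr+b_m)$. By \autoref{ass:cov.G.eigen.sep} the nonzero eigenvalues are simple with gaps bounded below for $k\le K$ ($K$ fixed) and the $\h\vphi_k$ are sign-aligned, so the standard perturbation bound for eigenfunctions yields $\norm{\h\vphi_k-\vphi_k}_{L^2}\le C\delta_k\inv\|\h\calG-\calG\|_{\mathrm{op}}$ once $\|\h\calG-\calG\|_{\mathrm{op}}=o_p(1)$, and taking the supremum over $k\le K$ gives $\sup_{k\le K}\norm{\h\vphi_k-\vphi_k}_{L^2}=O_p\big((\sup_{k\le K}\delta_k\inv)(n\invsqr+b_m)\big)$. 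Combining this with the bound on the first summand gives the asserted rate $\sup_{i\le n}\sup_{k\le K}\abs{\ck\eta_{ik}-\eta_{ik}}=O_p\big(a_m+(\sup_{k\le K}\delta_k\inv)(n\invsqr+b_m)\big)$. The main obstacle is precisely this sharp covariance bound: one has to exploit the averaged MISE control in \autoref{ass:presmooth.MISE} in tandem with the oracle-covariance split, since a naive uniform $L^\infty$ bound would degrade $b_m$ to $a_m$ there, and one must be careful when passing from $\ck p_i$ to $\ck f_i=\clog\ck p_i$ through the logarithm, which is only Lipschitz on the high-probability event where the pre-smoothed densities stay bounded away from zero.
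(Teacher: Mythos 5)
Your proof is correct and follows what is essentially the paper's route: split $\ck\eta_{ik}-\eta_{ik}$ into the pre-smoothing/mean residual $\inner{r_i}{\h\vphi_k}$ (controlled uniformly at rate $a_m+n\invsqr+b_m$ via the union bound from \autoref{ass:presmooth.superTight} and the Lipschitz property of the log on the high-probability event) plus the eigenfunction perturbation term $\inner{f_i-\mu}{\h\vphi_k-\vphi_k}$, with the covariance operator estimated at the averaged rate $O_p(n\invsqr+b_m)$ in Hilbert--Schmidt norm so that the standard gap bound gives $\norm{\h\vphi_k-\vphi_k}=O_p(\delta_k\inv(n\invsqr+b_m))$; since $\delta_k\inv\geq 2/\lambda_1$ the pieces combine to the stated rate. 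The only point to tighten is that the bound $\E\normp[2]{\ck f_i-f_i}^2=O(b_m^2)$ must formally be carried out on the event where $\sup_i\norminf{\ck p_i-p_i}\leq Ra_m$ (so the log is uniformly Lipschitz and \autoref{ass:presmooth.MISE} for the densities transfers to the log-densities), which you already flag in your closing remark.
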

\begin{proof}
	See \autoref{supp-prop:comp.score.all} of the Supplementary Materials.
\end{proof}

With the previous proposition, it is easy to see $\sup_{i \leq n} \sup_{k \leq K} \abs{\ck\eta_{ik} - \eta_{ik}} = O_p(a_m + n\invsqr)$ as $n \toinf$ if $K$ is fixed. 
To have a similar result with $K \toinf$ as well, we need $n, K \toinf$ in such a way that $(n\invsqr + b_m) \sup_{k \leq K} \delta_k\inv \tozero$. Under the conditions of \autoref{prop:optimal.K}, we can set $K \asymp n^{\beta}$ for any $\beta < \max(1/2, r_m/3) / (1 + r)$, then the sample component scores are consistent uniformly. In particular, $K = \h K^*$ satisfies the requirement.

\subsection{Consistency of MLE in a Small and Fixed Sample} \label{ssec:consistencySmall}

The following result further characterizes the asymptotic behavior of the proposed MLE for a given new fitting sample with a small and fixed sample size $N$, where $K$ is finite and fixed and there is a large amount of observations from training subpopulations.

\begin{ps}\label{prop:rate.training.thetahat}
	Under \autoref{ass:boundedpdf}, \autoref{ass:presmooth.MISE}, \autoref{ass:presmooth.UnifTight}, \autoref{ass:presmooth.superTight}, and a fixed $K$, for a given sample $X_1, \dots, X_N$ drawn from density $p \in \calP$, assume the MLE $\rng p_{N,K} \in \calP_K$ exists and is unique with corresponding parameters $\rng \theta \in\dsR^K$.
	Then the MLE $\h p_{N,K} \in \h\calP_K$ exists and is unique with probability tending to 1 as $n\toinf$, where the $n$ is the number of training subpopulations used to construct $\h\calP_K$. The corresponding parameter $\h\theta_n \in\dsR^K$ converges to $\rng\theta$ in probability with
\begin{align*}
	\normp{\h\theta_n - \rng\theta} &= O_p(1/\sqrt n + a_m + b_m), \\
	\kldiv{\rng p_{N, K}}{\h p_{N,K}} &= O_p\left(
		\frac{1}{\sqrt n} + a_m + b_m
	\right).
\end{align*}
\end{ps}
\begin{proof}
	See \autoref{supp-prop:rate.training.thetahat} and \autoref{supp-coro:rate.training.kldiv} of in the Supplemental Materials.
\end{proof}
%\textbf{MAINTEXT ENDS HERE}
%\clearpage

%\begin{appendices}
\appendix
	\appendix

\section{Assumptions for Pre-smoothing}\label{appendix:presmooth}

When discrete training samples and pre-smoothing are involved, the following conditions in parallel to (D1), (D2), and (S1) in \cite{pete:16} are also required.
Let $q \in \calP$ be a fix density;
the conditions here are imposed on family $\calP$ but not on the distribution of the random density $p$.
Further, let $\ck q$ to be the corresponding pre-smoothed density, e.g. according to the weighted KDE defined in \eqref{def:weighted.KDE}.

\begin{enumerate}[label=(D\arabic*)]
	\item \label{ass:presmooth.MISE} There exists a positive sequence $b_N = o(1)$ as $N \toinf$, such that based on an i.i.d. sample of size $N$ from density $q \in \calP$, the density estimator $\ck q$  satisfies $\ck q \geq 0$, $\int_\calT \ck q = 1$ and $$\sup_{q \in \calP} \E(\normp[2]{\ck q - q}^2) = O(b_N^2).$$
	
	\item \label{ass:presmooth.UnifTight} There exists a positive sequence $a_N = o(1)$ as $N \toinf$ and some $R > 0$ such that based on an i.i.d. sample of size $N$ from a population for which the density is $q \in \calP$, density estimator $\ck q$ satisfies $\ck q \geq 0$, $\int_\calT \ck q = 1$ and 
	$$
	\sup_{q \in \calP} \Pr(\norminf{\ck q - q} > R a_N) \rightarrow 0 \text{ as } N \toinf.
	$$
\end{enumerate}    

\begin{enumerate}[label=(S\arabic*)]    
	\item \label{ass:presmooth.superTight} Let $\ck q$ be a density estimator satisfying \ref{ass:presmooth.UnifTight} and $N_i = N_i(n), i = 1, \ldots, n$ be the sample sizes in the training subpopulations. There exists a sequence of lower bounds $m(n) \leq \min_{1 \leq i \leq n} N_i$ such that $m(n) \to \infty$ as $n \toinf$ and
	$$
	n \sup_{q \in \calP} \Pr(\norminf{\ck q - q} > R a_m) \to 0,
	$$
	where $\ck q$ is obtained with sample size $N(n) \geq m(n)$.
\end{enumerate}

%\end{appendices}

\

\

\begin{center}
	{\large\bf SUPPLEMENTARY MATERIAL}
\end{center}
The Supplemental Materials include additional simulation results and details of the theoretical works.

\bibliographystyle{chicago}
\bibliography{mr_ref.bib}

\end{document}